\newcommand{\be}{\begin{equation}}
\newcommand{\ee}{\end{equation}}
\newcommand{\ba}{\begin{eqnarray}}
\newcommand{\ea}{\end{eqnarray}}
\newcommand\tr{{\operatorname{tr}}}
\newcommand{\ignore}[1]{}
\newcommand{\ket}[1]{\left | {#1} \right \rangle }
\newcommand{\bra}[1]{\left \langle {#1} \right | }
\newcommand{\de}[0]{{\operatorname{d}}}
\newcommand{\aver}[1]{ \langle  {#1}  \rangle }
\def\CC{{\rm\kern.24em \vrule width.04em height1.46ex depth-.07ex
    \kern-.29em C}}
\def\P{{\rm I\kern-.25em P}}
\def\RR{{\rm
         \vrule width.04em height1.58ex depth-.0ex
         \kern-.04em R}}
\def\bbbone{{\mathchoice {\rm 1\mskip-4mu l} {\rm 1\mskip-4mu l}
{\rm 1\mskip-4.5mu l} {\rm 1\mskip-5mu l}}}
\def\bbbc{{\mathchoice {\setbox0=\hbox{$\displaystyle\rm C$}\hbox{\hbox
to0pt{\kern0.4\wd0\vrule height0.9\ht0\hss}\box0}}
{\setbox0=\hbox{$\textstyle\rm C$}\hbox{\hbox
to0pt{\kern0.4\wd0\vrule height0.9\ht0\hss}\box0}}
{\setbox0=\hbox{$\scriptstyle\rm C$}\hbox{\hbox
to0pt{\kern0.4\wd0\vrule height0.9\ht0\hss}\box0}}
{\setbox0=\hbox{$\scriptscriptstyle\rm C$}\hbox{\hbox
to0pt{\kern0.4\wd0\vrule height0.9\ht0\hss}\box0}}}}
\def\bbbz{{\mathchoice {\hbox{$\sf\textstyle Z\kern-0.4em Z$}}
{\hbox{$\sf\textstyle Z\kern-0.4em Z$}}
{\hbox{$\sf\scriptstyle Z\kern-0.3em Z$}}
{\hbox{$\sf\scriptscriptstyle Z\kern-0.2em Z$}}}}
\renewenvironment{proof}{{\bf \emph{Proof.} }}{\hfill $\Box$ \\} 
\newtheorem*{theorem*}{Theorem}
\newtheorem{defin}{Definition}
\newtheorem*{defin*}{Definition}
\newtheorem*{corollary*}{Corollary}
\newtheorem{lemma}{Lemma}
\newtheorem*{lemma*}{Lemma}
\newtheorem*{remark*}{Remark}
\begin{document}
\setcounter{secnumdepth}{3}


\title{Stability of topological purity under random local unitaries}

\author{Salvatore F.E. Oliviero}\email{s.oliviero001@umb.edu}
\affiliation{Physics Department,  University of Massachusetts Boston,  02125, USA}
\author{Lorenzo Leone}
\affiliation{Physics Department,  University of Massachusetts Boston,  02125, USA}
\author{You Zhou}
\affiliation{Centre for Quantum Technologies. National University of Singapore, 3 Science Drive 2, Singapore 117543, Singapore}
\affiliation{Nanyang Quantum Hub, School of Physical and Mathematical Sciences, Nanyang Technological University, Singapore 637371, Singapore}
\author{Alioscia Hamma}
\affiliation{Physics Department,  University of Massachusetts Boston,  02125, USA}
\affiliation{Université Grenoble Alpes, CNRS, LPMMC, 38000 Grenoble, France}

\begin{abstract}

In this work, we provide an analytical proof of the robustness of topological entanglement under a model of random local perturbations. We define a notion of average topological subsystem purity and show that, in the context of quantum double models, this quantity does detect topological order and is robust under the action of a random quantum circuit of shallow depth. 
\end{abstract}

\maketitle
\section*{Introduction}
Topological order\cite{Wen2007Quantum} is a novel kind of quantum order that goes beyond the paradigm of symmetry breaking. Its role is prominent in condensed matter theory as well as in quantum computation. In particular, topological order can be employed to construct various models for robust quantum memory and logic gates\cite{Kitaev2003fault,key1943131m}. Topologically ordered states show patterns of non local quantum entanglement that cannot be detected by a local order parameter. However, the long-range quantum entanglement leaves its mark in the reduced density matrix, and a series of papers have shown that topological order can be detected by the topological entropy\cite{Hamma2005ground,Kitaev2006topological, Levin2006topological}: a topological correction to the area law for the entanglement entropy. In particular, topological entanglement entropy has been employed to characterize the ground state of different models\cite{Chung2010topo,Mezzacapo12ground,Fradkin2007topo,Zhang2011Topo,Jiang2012heisenberg,Jiang2012topological,Furukawa2007topo,Isakov2011topo,Micallo2020topo,Gong2013honey,Dong_2008topo}   Recent works have shown that this type of long-range quantum entanglement together with the topological entropy is robust against local perturbation of the Hamiltonian\cite{Trebst2007topo, Jadamagni2018robust, Dusuel2011robust} and small deformation of partition geometry. Attempts at showing the robustness of the topological order under local perturbations\cite{Bravyi2010topological,Bravyi2011topo} are either a proof of the stability of the phase or are often limited to specific examples and are mostly of numerical nature or resort to quantum field theory arguments\cite{Kitaev2006topological}, while an analytical proof for the robustness of topological entropy in lattice models is still lacking\cite{Jahromi2013robust, Hamma2013local,Hamma2014local}. In this work, we provide an analytic proof of the robustness of topological order under a noise model consisting of random local unitaries. To this end, we construct a notion of average topological subsystem purity that captures the same long-range pattern of entanglement of topological entanglement entropy, and we show that such topological purity is constant if the circuit is shallow compared to the relevant size of the subsystem (which, can be made scale with the size of the whole system $N$).

We work in the framework of quantum double models on the cyclic group $\mathbb{Z}_{d}$ introduced by Kitaev in \cite{Kitaev2003fault}, and define the topological purity (TP), which is related to the topological $2-$R\'enyi entropy defined in \cite{Flammia2009topological}. There are many reasons to use purity instead of entanglement entropy in order to argue about questions about quantum many-body systems. Unlike  the Von Neumann entanglement entropy (whose measurement requires a complete state tomography of the system\cite{Amico2008Ent}), the
$2-$R\'enyi entropy is directly related to the purity which is an observable and can be measured directly\cite{Horodecki2002method, Hastings2010Renyi, Enk2012measure, Abanin2012meas,Brydges2019randomized} as it is the expectation value of the swap operator over two copies of the system.  This quantity contains substantial information about the universal properties of quantum many-body systems\cite{Zanardi2013Ent} and it is able to reveal the topological pattern of entanglement\cite{Flammia2009topological,Hamma2012topo}. This property makes purity also amenable analytical treatment\cite{page1993average,Bose2001pur,Zanardi2004pur,LIDAR200682,DePasquale2011stat,streltsov2018coherence}.

To prove the robustness of topological order by the topological purity we introduce, as a noise model, a set of quantum maps whose action on a state is based on local random (shallow) quantum circuits.  We find that the topological purity distinguishes two phases of states, attaining two different constant values. When the circuit depth is comparable with the subsystem size, the long-range pattern of entanglement that is responsible for topological order can be changed and the topological purity can change value. The phase is then indeed the orbit of the so defined set of quantum maps through a reference state. The proof is obtained thanks to two key non trivial facts: (i) the subsystem purity of the ground state of $\mathbb{Z}_{d}$ quantum double models only depends on the geometry of its boundary, while the topological purity only depends on its topology, and  (ii) the action of the specific noise model we work with can be regarded as the evolution of that boundary. Since the maps are shallow, their action will result in a local deformation of the subsystem boundary that does not alter their topology, and, by (i), this will result in an exactly constant topological purity. Similarly, we show that the topological purity of a topologically trivial state is zero and that it cannot be changed by our noise model.

The paper is organized as follows: in Sec.\ref{sec:QuantumDouble} we review $\mathbb{Z}_d$ quantum double models; in Sec.\ref{sec:PT} we introduce the topological purity and discuss how it is connected with other measures of topological entropy; in Sec.\ref{phases} we introduce the noise model and finally in Secs.\ref{model} and \ref{mainresult}, will be devoted to the rigorous proof of our result and will be rather technical. 

\section{Quantum Double models on $\mathbb{Z}_{d}$}\label{sec:QuantumDouble}
Quantum double models are exactly solvable models defined on a lattice\cite{Kitaev2003fault}. Consider the cyclic finite group $\mathbb{Z}_{d}$ with $|\mathbb{Z}_{d}|=d$ and local Hilbert spaces $\mathcal{H}_{i}\simeq \mathbb{C}^{d}$ and the total Hilbert space given by the tensor product of $N$ local Hilbert spaces, namely $\mathcal{H}=\bigotimes_{i=1}^{N}\mathcal{H}_{i}$ placed at  the bonds of a square lattice $(V,E)$, see  Fig.\ref{lattice}. The dimension of the total Hilbert space is thus $D=d^N$. Let  $B\equiv\{\ket{n}|\, n=0,\dots,  d-1\}$ be an orthonormal basis in $\mathcal{H}_{i}\simeq \mathbb{C}^{d}$.
\begin{figure}[h]
    \centering
    \resizebox{6cm}{!}{\includegraphics[scale=0.5]{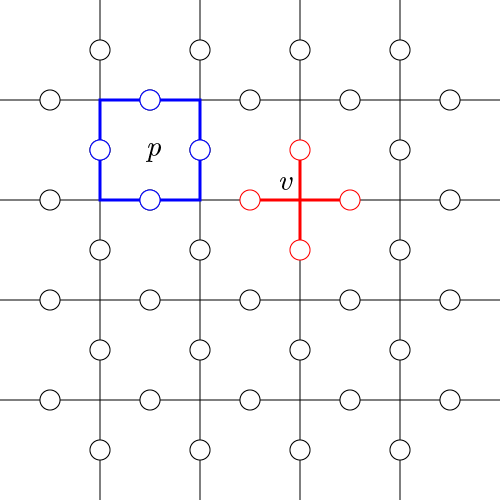}}
    \caption{A system of spins on square lattice, plaquette and star are denoted respectively by $p$ and $v$.}
    \label{lattice}
\end{figure}
For each local Hilbert space $\mathcal{H}_{i}$ we introduce the operators $\tilde{X}, \tilde{T}^{(j)}$ defined through their action on the ket $\ket{n}$:
\be
\tilde{X}^{m}\ket{n}=\ket{n+m}, \quad \tilde{T}^{(m)}\ket{n}=\delta_{mn}\ket{n}
\ee
where $\tilde{X}^{m}:=XX\cdots X$ $m$ times and the addition is modulo $d$. Consider the enlarged operators $X_{i}:=\tilde{X}_{i}\otimes \bbbone_{\mathcal{H}\setminus i}$ and $T^{(m)}_{i}:=\tilde{T}^{(m)}_{i}\otimes \bbbone_{\mathcal{H}\setminus i}$ acting non trivially only on the site $i\in V$. Define the following operators acting non trivially on the subset $v\subset V$, sketched in Fig.\ref{lattice}:
\be
A_{m}(v)=\prod_{i\in v} X_{i}^{m}, \quad B(p)=\sum_{\substack{m_{1},m_{2},m_{3},m_{4}\\ m_{1}+m_{2}+m_{3}+m_{4}=0, \mod{d}}}T^{(m_1)}_{i_1}T^{(m_2)}_{i_2}T^{(m_3)}_{i_3}T^{(m_4)}_{i_4}
\ee
note that $B(p)$(plaquette operator) and $A(v)=d^{-1}\sum_{m=0}^{d-1}A_{m}(v)$(star operator) are projectors. At this point, the Hamiltonian of the quantum double model reads:
\be
H_{QD}=\sum_{v}(\bbbone-A(v))+\sum_{p}(\bbbone-B(p))
\ee
and the ground state manifold $\mathcal{L}$ is given by:
\be
\mathcal{L}=\{\ket{\psi}\in\mathcal{H}|\, A(v)\ket{\psi}=\ket{\psi}, \, B(p)\ket{\psi}=\ket{\psi}\}.
\label{manifold}
\ee
 To represent the ground state in terms of the spin degrees of freedom, let us introduce $G$  the group generated by all the $A_{m}(v)$ operators, defined as $G=\langle\{A_{m}(v)\,|m=0,\dots, d-1,\,v=1,\dots, N/2\}\rangle$. The  state $\ket{\psi_{GS}}$  defined as 
\be
\ket{\psi_{GS}}=\prod_{s}A(s)\ket{0}^{\otimes N}=d^{-N/2}\prod_{s}\sum_{m=0}^{d-1}A_{m}(s)\ket{0}^{\otimes N}=d^{-N/2}\sum_{h\in G}h\ket{0}^{\otimes N}
\ee
is a state in $\mathcal{L}$, as it can be readily checked. Other basis states in $\mathcal{L}$ can be constructed by the use of non contractible loop operators\cite{Kitaev2003fault}. 
The topological order in this model can be detected by  the entanglement entropy in the ground state manifold. Consider a bipartition in the Hilbert space, namely $\mathcal{H}=\mathcal{H}_{\Lambda}\otimes \mathcal{H}_{\bar{\Lambda}}$ and compute the reduced density matrix $\rho_{\Lambda}$\cite{Flammia2009topological}:
\be
\rho_{\Lambda}=\tr_{\bar{\Lambda}}\Psi_0=\frac{|G_{{\bar{\Lambda}}}|}{|G|}\sum_{h\in G/G_{{\bar{\Lambda}}},\tilde{h}\in G_{\Lambda}}h_{\Lambda}^{-1}\ket{0}\bra{0}^{\otimes N}h_{\Lambda}\tilde{h}_{\Lambda}
\ee
where $\Psi_0\equiv \ket{\psi_{GS}}\bra{\psi_{GS}}$ and we introduced $G_{\Lambda}:=\{g \in G|\, g=g_{\Lambda}\otimes \bbbone_{\bar{\Lambda}}\}$ and $G_{\bar{\Lambda}}:=\{g \in G|\, g=\bbbone_{\Lambda}\otimes g_{\bar{\Lambda}}\}$ that are normal groups in $G$, and the quotient groups $G/G_{\Lambda}$ and $G/G_{\bar{\Lambda}}$. Following \cite{Flammia2009topological} we can prove that $\rho_{\Lambda}^{2}=\frac{|G_{\Lambda}||G_{\bar{\Lambda}}|}{|G|}\rho_\Lambda$ and thus the purity is given by $P_{\Lambda}(\rho)=\frac{|G_{\Lambda}||G_{\bar{\Lambda}}|}{|G|}$, i.e one can argue that the purity is counting the number of \textit{independent} operators $A_{m}(v)$ acting non trivially on both regions $\Lambda$ and $\bar{\Lambda}$. Following \cite{Hamma2005ground}, given a region $\Lambda$, the number of $A_{m}(v)$ operators acting on both subsystems $\Lambda$ and $\bar{\Lambda}$ is $d^{|\partial \Lambda|- n_{2}-2n_{3}}$ where $|\partial\Lambda|$ is the cardinality of the boundary of $\Lambda$, i.e the number of sites in $\bar{\Lambda}$ having at least one nearest neighbor inside $\Lambda$, and $n_{i}$, for $i=2,3$, is the number of sites in $\bar{\Lambda}$ having $i$ nearest neighbors inside $\Lambda$. Thus $n_{2}+2n_{3}$ is a geometrical correction which depends on the shape of the region $\Lambda$. For example, if $\Lambda$ is a convex loop (a rectangle) $n_{2}=n_{3}=0$. So far we accounted for the number of star operators acting on both subsystems, but not all of them are independent from each other because of the constraints on the ground state manifold in \eqref{manifold}, in particular the condition $\ket{\psi_{GS}}\in\mathcal{L}\iff \prod_{p}B(p)\ket{\psi_{GS}}=\ket{\psi_{GS}}$. Following \cite{Hamma2005ground, Levin2006topological} and defining $n_{\partial}(\Lambda)$ as the number of boundaries of $\Lambda$, we have that the number of independent star operators is $d^{|\partial \Lambda|- n_{2}-2n_{3}-n_{\partial}(\Lambda)}$, i.e for each boundary of $\Lambda$ we have that the number of independent star operators acting on both subsystems decreases of a factor scaling as $d^{-1}$. We thus can finally write the following:
\be
P_{\Lambda}(\Psi_0)=2^{-\log_{2}d|\partial\Lambda|+\Gamma_{\Lambda}}
\label{purity}
\ee
where $\Gamma_{\Lambda}=\gamma_{\Lambda}+n_{\partial}(\Lambda)\gamma$ is the sum of a geometrical term $\gamma_{\Lambda}=\log_2 d(n_2+2n_3)$ which depends on the shape of the boundary $\partial\Lambda$ and a topological correction $n_{\partial}(\Lambda)\gamma$, due to the actual number of independent star operators, only related to the topology of $\Lambda$. This topological correction $\gamma\equiv \log_{2}d$ is called \textit{topological entropy}\cite{Hamma2005ground,Levin2006topological,Kitaev2006topological}. Eq.\eqref{purity} is of fundamental importance for the reminder of the paper: it is telling us that the purity of the reduced density matrix in the ground state manifold of the topologically ordered quantum double model depends on the boundary $\partial \Lambda$ only. 
\section{The topological purity}\label{sec:PT}
In this section, we show how the topological pattern of entanglement involved in topologically ordered states\cite{Wen2007Quantum} can be also found in a new quantity: the topological purity (TP). To understand heuristically how this quantity works, let us first introduce the topological entropy: consider the state $\sigma$ living in the Hilbert space $\mathcal{H}\equiv\mathcal{H}_A\otimes \mathcal{H}_B\otimes \mathcal{H}_C\otimes \mathcal{H}_{D}$ and the regions $AB,BC, B$ and $ABC$ drawn in Fig.\ref{fig1} $(a)$. The topological entropy is defined as 
\be 
S_{top}(\sigma)=S_{ABC}(\sigma)+S_{B}(\sigma)-S_{AB}(\sigma)-S_{BC}(\sigma)
\label{topentropy}
\ee 
where $S_{\Lambda}(\sigma)$ labels the Von Neumann entropy of $\tr_{\bar{\Lambda}}(\sigma)$ where $\bar{\Lambda}$ is the complement of $\Lambda$ with respect to $ABCD$. As it was shown in\cite{Flammia2009topological}, also all the Renyi topological entropies give exactly the same results for the quantum double models. The definition of the topological entropy is equal to minus the quantum conditional information $I(A;C|B)$\cite{Nielsen}, which is a entropic quantity describing tripartite correlations of quantum states. 
\begin{figure}[H]
\centering
\resizebox{9cm}{!}{\includegraphics[scale=0.8]{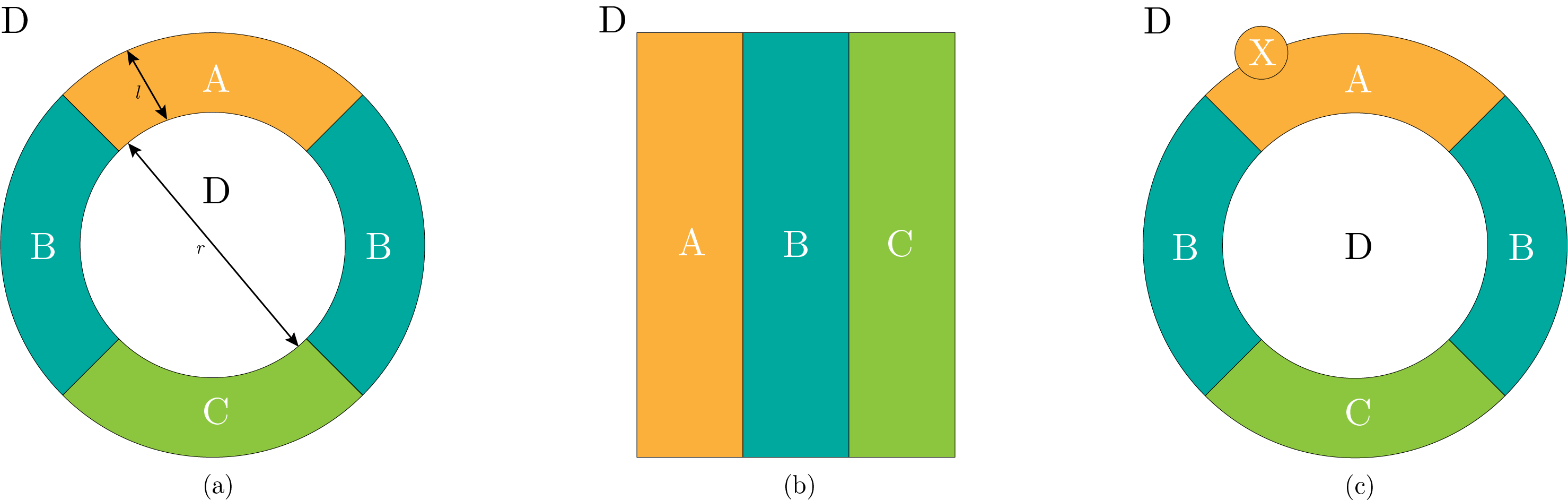}}
\caption{$(a)$ The graph configuration to define the topological entropy, i.e. $I(A;C|B)$. $l$ is the feature size of this graph configuration and $l\sim O(L)$, where L is the whole lattice size, while $r$ is radius of this topologically non trivial domain. $(b)$ $I(A;C|B)=0$ in this simple graph configuration, even though there is long-range entanglement of the ground state. $(c)$ Local boundary modification that does not change the topology of $ABC$; as proven in Lemma \ref{lemma}, the topological purity computed in $ABC$ equals the topological purity computed in $ABC\cup X$.}\label{fig1}
\end{figure}

In the same fashion of Eq.\eqref{topentropy}, the topological purity is defined as:
\begin{equation}\label{Eq:toppurity}
P_{top}(\sigma):=\frac{P_{AB}(\sigma)P_{BC}(\sigma)}{P_{B}(\sigma)P_{ABC}(\sigma)}
\end{equation}
i.e. the ratio of purities of the reduced density matrix of a quantum state $\sigma$ in the four subsystems $AB$, $BC$, $B$ and $ABC$. The purity of $\sigma$ in the subsystem $\Lambda$ is defined as $P_{\Lambda}(\sigma):=\tr [(\tr_{\bar{\Lambda}}\sigma)^{2}]$, where $\bar{\Lambda}$ is the complement of $\Lambda$. 
By definition,
\begin{equation}\label{Eq.logtoppurity}
-\log P_{top}(\sigma)=-\log P_{AB}(\sigma)- \log P_{BC}(\sigma)+\log P_{B}(\sigma)+\log P_{ABC}(\sigma)\\
\end{equation}
is just the topological 2-R\'enyi entropy.
Consider the domain $ABC$ in Fig.\ref{fig1} $(a)$, then $P_{\Lambda}=2^{-\log_2 d|\partial\Lambda|+\Gamma_{\Lambda}}$ for each $\Lambda\in\{AB,BC,B,ABC\}$ and since $|\partial AB|+|\partial BC|=|\partial B|+|\partial ABC|$, the topological purity is just given by the sum of the geometrical corrections 
\be
P_{top}(\Psi_{0})=2^{\Gamma_{AB}+\Gamma_{BC}-\Gamma_{B}-\Gamma_{ABC}}\equiv 2^{-2\gamma}
\label{toprule}
\ee
where $-2\gamma\equiv \Gamma_{AB}+\Gamma_{BC}-\Gamma_{B}-\Gamma_{ABC}$ is the topological entropy\cite{Levin2006topological}. Note that, according to the discussion in the previous section, all the geometrical corrections related to the shape of the boundary $\partial (ABC)$ are canceled by the choice of the partitions $AB, BC, B, ABC$, namely $\gamma_{AB}+\gamma_{BC}=\gamma_{B}+\gamma_{ABC}$, and the only surviving term is the topological correction that does not depend on the shape of the boundary: it is a purely topological correction $\propto \gamma$. This correction is the mark of the topological phase. It is worth noting that the topological correction would not be detected from the topological purity if $ABC$ was a simply connected region as the one sketched in Fig.\ref{fig1}, see also \cite{Levin2006topological}. That is because, as shown in the previous section, the number of boundaries $n_{\partial}(\Lambda)$ gives the number of topological corrections $\gamma$ to the purity $P_{\Lambda}$ of the related subsystem $\Lambda$. Specifically, consider Fig.\ref{fig1} $(b)$ first: we have $n_{\partial}(AB)=n_{\partial}(BC)=n_{\partial}(B)=n_{\partial}(ABC)=1$, and thus according to Eq.\eqref{toprule} we have $2\gamma -2\gamma=0$, while for Fig.\ref{fig1} $(a)$: $n_{\partial}(AB)=n_{\partial}(BC)=1$ and $n_{\partial}(B)=n_{\partial}(ABC)=2$, thus $2\gamma-4\gamma=-2\gamma$.

In \cite{Kitaev2006topological}, it was argued, through TQFT arguments, that a local deformation of the boundary will not affect the topological entropy. For the lattice model, we can prove the following lemma that affirms the same property:
\begin{lemma}\label{lemma}
The TP for the state $\Psi_{0}\in\mathcal{L}$ is stable under local boundary deformations. 
\end{lemma}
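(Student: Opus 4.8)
The plan is to read everything off the master formula \eqref{purity}, which for $\Psi_{0}\in\mathcal{L}$ gives $-\log_2 P_{\Lambda}(\Psi_0)=\log_2 d\,|\partial\Lambda|-\gamma_{\Lambda}-n_{\partial}(\Lambda)\gamma$, and to split this into a \emph{boundary-local} functional $F(\partial\Lambda):=\log_2 d\,|\partial\Lambda|-\gamma_{\Lambda}$ and a \emph{topological} term $-n_{\partial}(\Lambda)\gamma$. The point is that $F$ is additive along the boundary: both $|\partial\Lambda|$ and the geometric correction $\gamma_{\Lambda}=\log_2 d\,(n_2+2n_3)$ are sums of contributions attached to the local shape of $\partial\Lambda$, whereas the topological term sees only the number of boundary components $n_{\partial}(\Lambda)$. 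I would then show (i) that $F$ cancels identically in the combination \eqref{Eq:toppurity}, and (ii) that a local deformation changes $F$ for the four regions in a way that still cancels while leaving every $n_{\partial}$ fixed.

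For (i) I would decompose each of the boundaries $\partial(AB),\partial(BC),\partial B,\partial(ABC)$ into the interfaces $X|Y$ separating the elementary regions $A,B,C,D=\overline{ABC}$, and track the sign with which each interface enters $F(\partial AB)+F(\partial BC)-F(\partial B)-F(\partial ABC)$. Direct bookkeeping gives: $A|D$ enters $(+,-)$ through $AB,ABC$; $C|D$ enters $(+,-)$ through $BC,ABC$; $B|D$ enters $(+,+,-,-)$ through $AB,BC,B,ABC$; $A|B$ enters $(+,-)$ through $BC,B$; $B|C$ enters $(+,-)$ through $AB,B$; so each has net coefficient zero. The only surviving interface would be $A|C$, which is absent precisely because in the configuration of Fig.\ref{fig1}$(a)$ the regions $A$ and $C$ are not adjacent. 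Hence $F$ cancels and $-\log_2 P_{top}(\Psi_0)=-\gamma\,[\,n_{\partial}(AB)+n_{\partial}(BC)-n_{\partial}(B)-n_{\partial}(ABC)\,]=2\gamma$, recovering \eqref{toprule}.

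For (ii) it suffices, by composing elementary moves, to treat the addition (or removal) of a single site to one of $A$, $B$ or $C$ along its interface with $D$, localized away from the triple junctions where three elementary regions meet. Such a move alters the boundary only along one interface $X_0|D$ with $X_0\in\{A,B,C\}$, and changes $F$ by the \emph{same} local increment $\delta$ in every region whose boundary contains that piece, since the local neighborhood of the deformed segment is identical whether it is viewed as part of, say, $\partial(AB)$ or $\partial(ABC)$. Because each interface $X_0|D$ enters the combination with net coefficient zero, these increments cancel: e.g. growing $A$ into $D$ changes only $F(\partial AB)$ and $F(\partial ABC)$, both by $\delta$ but with opposite signs. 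As the move preserves the topology by hypothesis, every $n_{\partial}(\Lambda)$ is unchanged, so the topological term is untouched; therefore $P_{top}(\Psi_0)$ is invariant, which is the assertion of the lemma.

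The main obstacle is justifying that $F$ — and in particular the geometric correction built from $n_2,n_3$ — is genuinely local and additive enough that a deformed segment contributes identically when it is shared between two regions, and that the deformation can be confined to a single interface away from triple junctions, so that no new $A|C$ adjacency is created and the net-zero coefficients of step (i) continue to apply. A deformation passing near a junction, or one that threatens to change $n_{\partial}$, is exactly the case excluded by the hypothesis that the topology of $ABC$ is unchanged, and the general statement then follows by iterating the elementary single-site argument.
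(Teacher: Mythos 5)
Your proposal follows essentially the same route as the paper's own proof: both read the lemma off Eq.~\eqref{purity}, observe that a local deformation glued to the interface of one elementary region with $D$ modifies only those of the four purities whose regions contain that interface, and note that these enter $P_{top}$ with opposite signs, so that the length term $\log_2 d\,|\partial\Lambda|$ and the geometric correction $\gamma_{\Lambda}$ cancel pairwise while every $n_{\partial}$ --- hence the topological term --- is untouched. Your interface bookkeeping in step (i) is a more systematic version of the identities $|\partial(AB\cup X)|+|\partial(BC)|=|\partial B|+|\partial(ABC\cup X)|$ and $\gamma_{AB\cup X}+\gamma_{BC}=\gamma_{B}+\gamma_{ABC\cup X}$ that the paper simply asserts, and your ``same increment $\delta$, opposite signs'' mechanism is exactly the paper's cancellation between $AB\cup X$ and $ABC\cup X$ (with $B$ and $BC$ inert).

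One step, however, is wrong as stated: you dismiss deformations near triple junctions by claiming they are ``exactly the case excluded by the hypothesis that the topology of $ABC$ is unchanged.'' That is not so. Adding a single site to $A$ at the corner where $A$, $B$ and $D$ meet preserves the topology of $ABC$ and all the $n_{\partial}$'s, yet it is a junction move; hence your reduction to single-site moves ``away from triple junctions'' does not cover all topology-preserving deformations, whereas the paper's proof (which allows an arbitrary local $X$ with $X\cap\partial A\neq\emptyset$) does. The gap is repaired by your own locality argument, applied pairwise rather than interface-by-interface: for a site at the $A$-$B$-$D$ junction adjoined to $A$, the regions $AB$ and $ABC$ coincide as sets in a neighborhood of the deformed site (they differ only by $C$, which is far away), so their $F$-increments are equal and cancel by sign, while $B$ and $BC$ are unchanged as sets and so are their purities; for a site adjoined to $B$ at that junction, the pairs $(AB,ABC)$ and $(B,BC)$ each consist of locally identical regions, and each pair cancels separately even though the two pairs acquire different increments. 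With that replacement your argument covers the general case and coincides with the paper's.
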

\begin{proof} Consider a subset $X\subset V$ and consider the boundary transformation $ABC\rightarrow ABC\cup X$. Let us suppose, without loss of generality, that $X\cap \partial A\neq \emptyset$, cfr. Fig.\ref{fig1} $(c)$. Thus, according to  Eq.\eqref{purity}, we have:
\ba
P_{AB}(\Psi_0)&\rightarrow& P_{AB\cup X}=2^{-|\partial(AB\cup X)|+\Gamma_{AB\cup X}}\\ P_{ABC}(\Psi_0)&\rightarrow& P_{ABC\cup X}=2^{-|\partial(ABC\cup X)|+\Gamma_{AB\cup X}}
\ea
while $P_{B}(\Psi_0)$ and $P_{BC}(\Psi_0)$ are unmodified. Then note that $| \partial(AB\cup X)|+|\partial (BC)|=|\partial B|+|\partial(ABC\cup X)|$ and that $\Gamma_{AB\cup X}+\Gamma_{BC}-\Gamma_{B}-\Gamma_{ABC\cup X}=-2\gamma$ because $\gamma_{AB\cup X}+\gamma_{BC}=\gamma_{B}+\gamma_{ABC\cup X}$.
\end{proof}
The request of the boundary modification being local is extremely important. Indeed, it is always true that, as sets, $\gamma_{AB}+\gamma_{BC}=\gamma_{B}+\gamma_{ABC}$. However, only if the boundary modifications are local then the number of disconnected boundaries $n_\partial$ is unchanged.

The above lemma is necessary for the TP to be a good detector of the TO. In particular, if $P_{top}(\sigma)=1$, then $\sigma$ is a trivial topological state, while if $P_{top}(\sigma)<1$, $\sigma$ is expected to be in a non trivial topological phase.

\section{Stability of topological purity}
In this section, we establish a noise model based on quenched disorder, and show how the topological order behaves under the noise model. We consider a random local unitary noise on the spins in the lattice. Under this noise model, one can compute the average purity in a subsystem due to this noise. This is the purity one would measure in an experiment if the measurement time-scales are much longer than the random fluctuations in the unitary noise. Then one can define a topological purity associated with these average purities. If the noise is zero, the average is just the purity of the initial state, and the two topological purities coincide. 

We want to show that, if the random unitaries build up a shallow quantum circuit, then the topological purity is constant. The proof is quite technical, so let us first give an intuitive explanation. We first show that, the purity in the subsystem $\Lambda$ averaged over all the transformations of the state under the noise model is equal to the purity in the {\em initial state} for a subsystem with a {\em different boundary}, depending only on which spins were affected by the random quantum noise. It is a very non trivial fact that under such kind of noise, the average purity results in just a boundary transformation for the purity in the initial state. Then it follows that, if the noise is shallow, only shallow deformations of the boundary are possible, and by Lemma 1 the topological purity is preserved.
\subsection{Topological purity and phases}\label{phases}
Let us now dive into the technical details of the noise model and the robustness of topological purity.
Let $\mathcal H_V \simeq \mathbb C^D\simeq \mathbb C^{d\otimes N}$ be the $D-$dimensional Hilbert space of $N$ qudits in a set $V$. Here, the Hilbert space of the $x-$th qudit is denoted by $ \mathcal H_x \simeq \mathbb C^d$. Let $\Lambda\subset V$ be a subset of these qudits and $\mathcal H_\Lambda= \otimes_{x\in\Lambda} \mathcal H_x$ the corresponding Hilbert space. 
Let $\tilde{T}_\Lambda$ be the order two permutation (swap) operator on $ \mathcal H_\Lambda^{\otimes 2}$ and let ${T}_\Lambda =\tilde{T}_\Lambda\otimes \bbbone_{\bar{\Lambda}}$ be its trivial completion on the full $\mathcal H_{\Lambda}^{\otimes 2}\otimes \mathcal H_{\bar{\Lambda}}^{\otimes 2}$.

The purity of the state $\sigma$ in the bipartition $\mathcal{H}_{\Lambda}\otimes \mathcal{H}_{\bar \Lambda}$  is given by
\ba
P_{\Lambda}(\sigma)\equiv \tr_\Lambda \sigma_\Lambda^2 =\tr(\sigma^{\otimes 2} T_\Lambda)\equiv\aver{T_{\Lambda}}_{\sigma^{\otimes 2}}
\label{purityasexpvalue}
\ea
where $\sigma_{\Lambda}:=\tr_{\bar\Lambda}\sigma$. The above chain of relations is telling us that the purity is from both the analytical point of view and the experimental point of view a quantity defined on two copies of the Hilbert space $\mathcal{H}$. In practice, in order to experimentally measure the purity of a quantum state $\sigma$ in a given bipartition $\mathcal{H}_{\Lambda}\otimes \mathcal{H}_{\bar\Lambda}$, one needs three steps: $(i)$ to prepare two identical copies of $\sigma$, $(ii)$ to build the observable \textit{swap operator} on the subspace $\Lambda$ and finally, $(iii)$ to take the quantum expectation value of $T_{\Lambda}$ in $\sigma\otimes \sigma$, namely $\aver{T_{\Lambda}}_{\sigma^{\otimes 2}}$. Similarly, in order to experimentally measure the topological purity, defined in Eq.\eqref{Eq:toppurity}, one needs to repeat the steps $(i)$, $(ii)$ and $(iii)$ for the four observables $T_{AB}, T_{BC}, T_{B}, T_{ABC}$ and then combine them in the following way:
\be\label{sasa2}
P_{top}(\sigma)=\frac{\aver{T_{AB}}_{\sigma^{\otimes 2}}\aver{T_{BC}}_{\sigma^{\otimes 2}}}{\aver{T_{B}}_{\sigma^{\otimes 2}}\aver{T_{ABC}}_{\sigma^{\otimes 2}}}
\ee

Since the purity is defined on $\mathcal{H}^{\otimes 2}$, we define a noise model on states living on two copies of the Hilbert space in the following way:  let $X\subset V$ be a set of qudits with Hilbert space $\mathcal{H}_{X}:=\otimes_{x\in X} \mathcal H_x$ and $d_{X}:=\dim \mathcal{H}_{X}$. Let $U_X$ be a local unitary operator operating on the region $X$, i.e. operating on all the qubits contained in $X$. Let $U_{X}^{\otimes 2}$ be two copies of $U_{X}$, then after operating on $\sigma^{\otimes 2}$ with the unitary $U_{X}^{\otimes 2}$, we have $\sigma^{\otimes 2} \mapsto U_{X}^{\otimes 2}\sigma^{\otimes 2} U_{X}^{\dag\otimes 2}$ and the purity becomes
\be
\aver{T_{\Lambda}}_{\sigma^{\otimes 2}}\mapsto \aver{T_{\Lambda}}_{U_{X}^{\otimes 2}\sigma^{\otimes 2} U_{X}^{\otimes 2}}\equiv\tr{\left(T_{\Lambda}U_{X}^{\otimes 2}\sigma^{\otimes 2} U_{X}^{\otimes 2}\right)}
\ee

We now choose $U_{X}$ to be a random unitary operator and define the following quantum map acting on $\sigma^{\otimes 2}$:
\be\label{rxdef}
R_{X}(\sigma^{\otimes 2}):=\int \de \mu(U|X) (U_X)^{\otimes 2} \sigma^{\otimes 2} (U_X)^{\dag\otimes 2}
\ee
where $\de \mu(U|X)$ is the Haar measure over the unitary group $\mathcal{U}(\mathcal{H}_X)$. Therefore, fixed $X\subset V$, the map $R_{X}(\cdot)$ randomizes over the action of the full unitary group on $\mathcal{H}_{X}^{\otimes 2}$. Thus, after the noise on $X$, the purity becomes:
\be
\aver{T_{\Lambda}}_{\sigma^{\otimes 2}}\mapsto \aver{T_{\Lambda}}_{R_X(\sigma^{\otimes 2})}\equiv \tr{\left(T_{\Lambda}R_X(\sigma^{\otimes 2})\right)}
\ee
note that the above operation is no more acting independently on the single copies of $\mathcal{H}$, but it is entangling them in $\mathcal{H}^{\otimes 2}$. So far this is a single $X$ noise model. In order to generalize it to more than one single $X$ domain, consider an ordered string of subsets $S=\{\tilde{X}_{1},\dots, \tilde{X}_{k}\}$ and random unitary operators $U_{\tilde{X}_{i}}^{\otimes 2}, i=1,\dots, k$ operating on the corresponding subset $\tilde{X}_{i}$ and acting on $\sigma^{\otimes 2}$ in an ordered way, namely $\sigma^{\otimes 2}\mapsto U_{\tilde{X}_k}^{\otimes 2}\cdots U_{\tilde{X}_1}^{\otimes 2}\sigma^{\otimes 2}U_{\tilde{X}_1}^{\dag\otimes 2}\cdots U_{\tilde{X}_k}^{\dag\otimes 2}$. We define the quantum map randomizing over the action of these gates as:
\be\label{rsdef}
\mathcal{R}_{S}(\sigma^{\otimes 2}):=R_{\tilde{X}_k}\cdots R_{\tilde{X}_1}(\sigma^{\otimes 2})
\ee
where 
\be
R_{\tilde{X}_i}\,: \mathcal{O}\mapsto R_{X_i}(\mathcal{O}):=\int \de \mu(U|\tilde{X}_i) (U_{\tilde{X}_i})^{\otimes 2} \mathcal{O} (U_{\tilde{X}_i})^{\dag\otimes 2}, \quad \mathcal{O}\in\mathcal{B}(\mathcal{H}^{\otimes 2})
\ee
For each string of domains $S$, the action of $\mathcal{R}_{S}$ on a state of $\mathcal{H}^{\otimes 2}$ describes the average action of a given random quantum circuit operating in the region $\tilde{X}_i \in S$, therefore at this point we define the set $\mathcal{S}$ of all such strings:
\be
\mathcal{S}:=\{S=\{\tilde{X}_{1},\dots, \tilde{X}_{k}|\, \tilde{X}_{i}\subset V, \,i=1,\dots, k\}, \, k\in\mathbb{N}\}
\label{defmathcalS}
\ee
It is straightforward to see that, for any subset $\tilde{\mathcal{S}}\subset \mathcal{S}$, the action of $\mathcal{R}_{S}$ on a state $\sigma^{\otimes 2}$ varying $S\in\tilde{\mathcal{S}}$  creates an ensemble of states living on $\mathcal{H}^{\otimes 2}$ as follows:
\be
\mathcal{E}_{\tilde{\mathcal{S}}}(\sigma^{\otimes 2}):=\{\mathcal{R}_{S}(\sigma^{\otimes 2})\in\mathcal{B}(\mathcal{H}^{\otimes 2})\, |\,S\in\tilde{\mathcal{S}}\}
\ee
i.e. the ensemble of states $\mathcal{E}_{\tilde{\mathcal{S}}}(\sigma^{\otimes 2})$ contains all the states $\Psi_S$ living in $\mathcal{H}^{\otimes 2}$ obtained by the action of $\mathcal{R}_{S}$ varying $S$ in a subset $\tilde{\mathcal{S}}$ of $\mathcal{S}$, defined in Eq.\eqref{defmathcalS}. Notice that each string of ordered domains $S$ describes a quantum circuit consisting of random gates with support on $\tilde{X}_i\in S$. 

Now we are ready to state the main result of this paper. Consider $\ket{\Psi_0}\in\mathcal{L}$ the ground state of the quantum double model (cfr. Sec.\ref{sec:QuantumDouble}) and a string of domains $S\in\mathcal{S}$. Then define $\Psi_{S}\in\mathcal{B}(\mathcal{H}^{\otimes 2})$ as the quantum state, living on two copies of $\mathcal{H}$ and obtained by the action of $\mathcal{R}_{S}(\cdot)$, $\Psi_{S}:=\mathcal{R}_{S}(\Psi_{0}^{\otimes 2})$. To characterize the topological nature of $\Psi_{S}$ we lay down the following definition:
\be
\tilde{P}_{top}(\Psi_{S}):=\frac{\aver{T_{AB}}_{\Psi_S}\aver{T_{BC}}_{\Psi_S}}{\aver{T_{B}}_{\Psi_S}\aver{T_{ABC}}_{\Psi_S}}
\label{toppurityforPsiS}
\ee
in the same fashion of the topological purity defined in Eq.\eqref{Eq:toppurity}, i.e. the ratio of the expectation values of the swap operators $T_{AB}, T_{BC}, T_{B}, T_{ABC}$. Notably the above definition is the extension of the topological purity in Eq.\eqref{sasa2} to non-product states of $\mathcal{H}^{\otimes 2}$. 

For product states $\Psi^{\otimes 2}$ of $\mathcal{H}^{\otimes 2}$ we indeed have: $\tilde{P}_{top}(\Psi^{\otimes 2})=P_{top}(\Psi)$. 
Since the TP is a quantity that naturally extends its definition for states of $\mathcal{H}^{\otimes 2}$, we refer to $\tilde{P}_{top}(\Psi_{S})$ as the \textit{topological purity} of $\Psi_{S}$ for non-product state as well.  
This protocol to detect topological order under the noisy channel we defined is experimentally feasible using the techniques in \cite{satzinger2021realizing}.

Now we can show the main result of this paper: in the following theorem we prove that the topological purity attains a constant value in the ensembles of states obtained from both the ground state of the toric code and a topologically trivial pure state, provided that the subset $\tilde{\mathcal{S}}\subset \mathcal{S}$ contains strings of domains $S$ describing shallow quantum circuits. Since this definition contains circuits with trivial action, this value is also the value of the topological purity in the initial state.
\begin{theorem*}
Let $\Psi_0$ be the ground state of a quantum double model and let $\Phi$ be a pure, topologically trivial quantum state. Let $\tilde{\mathcal S}\subset \mathcal{S}$ be the set defined in Definition \ref{definS}, then the topological purity is constant in the following ensembles of states: $\mathcal{E}_{\tilde{\mathcal{S}}}(\Psi_{0}^{\otimes 2})$, $\mathcal{E}_{\tilde{\mathcal{S}}}(\Phi^{\otimes 2})$, namely:
\be
\tilde{P}_{top}(\Psi_{S})=2^{-2\gamma}, \quad \forall \Psi_{S}\in \mathcal{E}_{\tilde{\mathcal{S}}}(\Psi_{0}^{\otimes 2}),
\ee
and 
\be
\tilde{P}_{top}(\Phi)=1, \quad \forall \Phi\in \mathcal{E}_{\tilde{\mathcal{S}}}(\Phi^{\otimes 2})
\ee
\end{theorem*}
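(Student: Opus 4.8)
The plan is to transfer the Haar average from the state to the observables and then track how the circuit deforms a swap operator. Since the Haar measure is invariant under $U\mapsto U^{\dagger}$, each twirl $R_{X}$ is self-adjoint in the Hilbert--Schmidt product, so that
\be
\aver{T_{\Lambda}}_{\Psi_S}=\tr\!\left(T_{\Lambda}\,\mathcal{R}_{S}(\Psi_{0}^{\otimes 2})\right)=\tr\!\left(\tilde{\mathcal{R}}_{S}(T_{\Lambda})\,\Psi_{0}^{\otimes 2}\right),
\ee
where $\tilde{\mathcal{R}}_{S}$ is the same product of twirls in reversed order. Thus the whole problem reduces to understanding how the circuit acts on the swap observable while the reference state $\Psi_{0}^{\otimes 2}$ stays fixed.

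Next I would compute the elementary twirl. By Schur--Weyl duality, the Haar average of $U_{X}^{\otimes 2}(\cdot)U_{X}^{\dagger\otimes 2}$ over $\mathcal{U}(\mathcal{H}_X)$ projects any operator onto the commutant $\mathrm{span}\{\bbbone_{X},\tilde{T}_{X}\}$. Writing the $X$-block of $T_{\Lambda}$ as $\tilde{T}_{\Lambda\cap X}\otimes\bbbone_{X\setminus\Lambda}$, this projection yields the clean identity
\be
R_{X}(T_{\Lambda})=a\,T_{\Lambda\setminus X}+b\,T_{\Lambda\cup X},
\ee
with Weingarten coefficients $a,b$ depending only on $|X\cap\Lambda|$ and $|X\cap\bar{\Lambda}|$. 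A twirl therefore deforms the region either by removing or by adding all of $X$ --- exactly the ``evolution of the boundary'' announced in the introduction --- and its ground-state expectation value is $a\,P_{\Lambda\setminus X}(\Psi_{0})+b\,P_{\Lambda\cup X}(\Psi_{0})$ by Eq.\eqref{purity}.

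The key cancellation then comes from the geometry of the four regions, which obey the inclusion--exclusion identity $|X\cap AB|+|X\cap BC|=|X\cap B|+|X\cap ABC|$. If $X$ is local it cannot meet $A$ and $C$ simultaneously, and this upgrades the sum identity to the multiset equality $\{|X\cap AB|,|X\cap BC|\}=\{|X\cap B|,|X\cap ABC|\}$; consequently the coefficients $a,b$ produced for the numerator regions coincide with those for the denominator regions. Applying Lemma \ref{lemma} to \emph{both} the shrunk region $(\cdot\setminus X)$ and the grown region $(\cdot\cup X)$ --- each a local boundary deformation preserving $n_{\partial}$ --- forces the deformed purities to retain the ratio $P_{AB}/P_{ABC}$, so the common coefficients factor out of \eqref{toppurityforPsiS} and $\tilde{P}_{top}(\Psi_{S})=2^{-2\gamma}$ is left unchanged. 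For the trivial state all purities equal one, and the same matched-coefficient pairing makes numerator and denominator coincide, giving $\tilde{P}_{top}(\Phi)=1$.

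The main obstacle is upgrading this single-gate mechanism to the full shallow circuit. Because $R_{X}$ turns one swap into a sum of two, after $k$ gates the Heisenberg-evolved observable is a superposition over $2^{k}$ deformed regions with state-independent coefficients, and one must show by induction that \emph{every} branch respects the same numerator/denominator pairing. The role of the shallowness hypothesis (Definition \ref{definS}) is precisely to guarantee that the union of the gate supports never connects $A$ to $C$, so that no accessible deformed region alters the topology; Lemma \ref{lemma} then applies term by term and the ratio telescopes to the constant value. Carefully controlling this induction --- in particular verifying that successive twirls keep the observable inside the abelian algebra of swaps and that the matched-coefficient structure survives overlapping supports --- is the technical heart of Sections \ref{model} and \ref{mainresult}.
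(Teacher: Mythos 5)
Your proposal is correct and takes essentially the same route as the paper's proof: the Heisenberg-picture duality $\aver{T_{\Lambda}}_{\mathcal{R}_{S}(\sigma^{\otimes 2})}=\aver{\mathcal{R}_{\bar{S}}(T_{\Lambda})}_{\sigma^{\otimes 2}}$, the two-term twirl identity $R_{X}(T_{\Lambda})=N_{d_{\Lambda\setminus X}}T_{\Lambda\setminus X}+N_{d_{\Lambda\cup X}}T_{\Lambda\cup X}$, expansion of each expectation value into boundary-deformed purities of $\Psi_{0}$ via Eq.\eqref{purity}, and term-by-term cancellation of boundary lengths and coefficients, with the topological term held at $-2\gamma$ precisely because strings in $\tilde{\mathcal{S}}$ cannot cut the donut or join $A$ to $C$ (or $B_{left}$ to $B_{right}$). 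Your multiset identity $\{|X\cap AB|,|X\cap BC|\}=\{|X\cap B|,|X\cap ABC|\}$, valid when no gate meets both $A$ and $C$, is a slightly cleaner justification of the coefficient matching $m_{AB_\alpha}m_{BC_\beta}=m_{B_\eta}m_{ABC_\zeta}$ than the paper's ``same trajectory'' pairing, and the multi-gate induction you defer is left at a comparably schematic level in the paper itself.
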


In the above theorem, $\tilde{S}\subset S$ is a subset of $\mathcal{S}$ which will be rigorously defined in Sec.\ref{mainresult}; morally $\tilde{S}$ contains strings $S$ of domains describing a shallow random quantum circuit that do not destroy the topological nature of $\Psi_{0}^{\otimes 2}$.
Since we found that the TP gets a constant value in two distinct ensemble of states, we claim that the topological purity is stable in the topological ordered phase ${\mathcal{E}}_{\tilde{S}}(\Psi_{0}^{\otimes 2})$ and in the topological trivial phase ${\mathcal{E}}_{\tilde{S}}(\Phi^{\otimes 2})$.

The proof of this theorem is in Sec.\ref{mainresult}. As we stated at the beginning of this section, the proof descends from the fact that the purity averaged over the noise is equal to the purity in the initial state, but with a different boundary. In the next section, we explain this very non trivial result, after which we will be ready to show the proof of the theorem.

\subsection{Purity dynamics under random quantum circuits}\label{model}
In this section, we show how, under the noise model defined by the quantum map Eq.(), the evolution of the purity becomes a boundary evolution for the purity in the initial state.

 Consider a state $\sigma$ and the swap operator $T_{\Lambda}$ defined in the region $\Lambda$. As shown in the previous section the purity of $\sigma$ in the region $\Lambda$ is the expectation value of the swap operator $T_{\Lambda}$ computed on two copies of the state $\sigma$, $P_{\Lambda}(\sigma)\equiv\aver{T_{\Lambda}}_{\sigma^{\otimes 2}}$. Let $R_{X}$ be the quantum map defined in Eq.\eqref{rxdef} and consider the expectation value of $T_{\Lambda}$ in the state $R_{X}(\sigma^{\otimes 2})$; because $R_{X}(\cdot)$ is an hermitian and self-dual operator\cite{Hammalungo_2012}, we can equivalently write:
\be
\aver{T_{\Lambda}}_{R_{X}(\sigma^{\otimes 2})}=\aver{R_{X}(T_{\Lambda})}_{\sigma^{\otimes 2}}
\label{heisenberg}
\ee
i.e. the expectation value of the swap operator $T_{\Lambda}$ on the state $R_{X}(\sigma^{\otimes 2})$ is equal to the expectation value of the evolved swap operator $R_{X}(T_{\Lambda})$, i.e. the image of $T_{\Lambda}$ under the map $R_{X}(\cdot)$, on the original state $\sigma^{\otimes 2}$. In practice, we are considering the Heisenberg picture for the evolution of the swap operator. This point of view is convenient for us, because - thanks to the simple equation \eqref{heisenberg} -  the purity dynamics can be described as the dynamics of the boundary $\partial \Lambda$ of the region $\Lambda$. First of all, let $X\subset V$ be a domain and let us compute the action of the map $R_{X}(\cdot)$ on the swap $T_{\Lambda}$. One can show\cite{Hammalungo_2012}:
\be
R_X(T_\Lambda)=\begin{cases}
N_{d_{\Lambda\setminus X}} T_{\Lambda\setminus X}+N_{d_{\Lambda\cup X}}T_{\Lambda\cup X},& X\cap\partial\Lambda \neq \emptyset , \\
T_\Lambda, &  X\cap\partial\Lambda=\emptyset
\end{cases}
\label{algebra}
\ee
where $N_{d_{\Lambda\setminus X}}:=(d_{ X}^{2}-d_{\Lambda\cap X}^{2})d_{\Lambda\cap X}^{-1}/(d_{X}^{2}-1)$ and $N_{d_{\Lambda\cup X}}:=d_{X}(d_{\Lambda\cap X}^{2}-1)d_{\Lambda\cap X}^{-1}/(d_{X}^{2}-1)$ and $d_{\Lambda\cap X}=\dim \mathcal{H}_{\Lambda\cap X}$.
\begin{figure}[h!]
    \centering
    \resizebox{9cm}{!}{\includegraphics[scale=0.6]{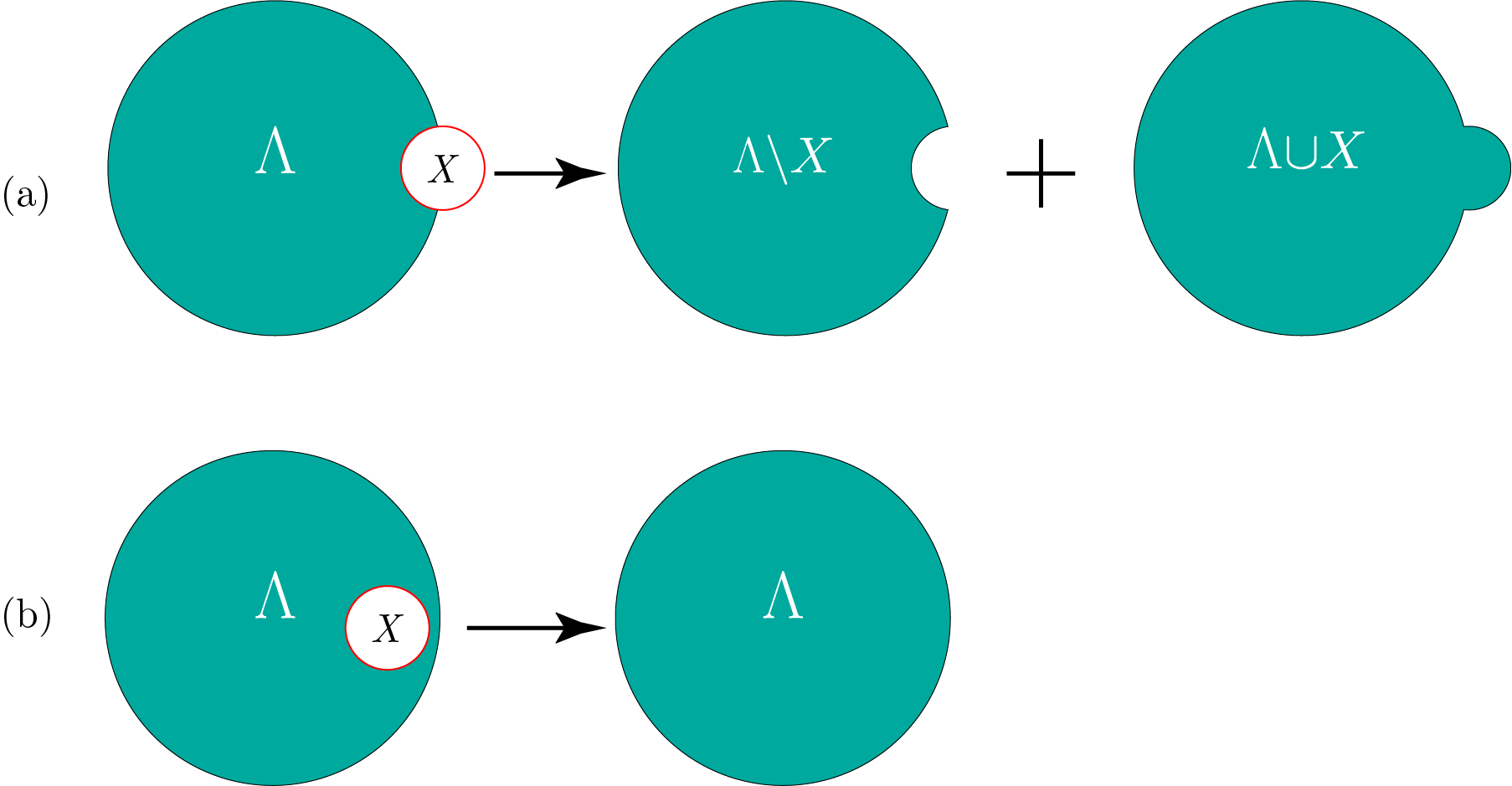}}
    \caption{An illustration of the action of the superoperator $R_{X}$ on the swap operator $T_{\Lambda}$ with support on the region $\Lambda\subset V$. In $(a)$ the domain $X$ has a non trivial overlap with the boundary $\partial\Lambda$ and, according to Eq.\eqref{algebra}, its action gives a linear combination of two domains, namely $\Lambda\setminus X$ and $\Lambda \cup X$. In $(b)$ the domain $X\subset \Lambda$ is completely contained in $\Lambda$ and its action is trivial. Note that we neglected the prefactors $N_{d_{\Lambda\setminus X}}$ and $N_{d_{\Lambda\cup X}}$, cfr. Eq.\eqref{algebra}.}
    \label{composition}
\end{figure}

A simple representation of this action is provided in Fig.\ref{composition}. Note that we can compactly write the above action as follows:
\be
R_{X}(T_{\Lambda})=(1-f(X,\Lambda))T_{\Lambda}+f(X,\Lambda)[N_{d_{\Lambda\cup X}}T_{\Lambda\cup X}+N_{d_{\Lambda\setminus X}}T_{\Lambda\setminus X}]
\label{compactalgebra}
\ee
where $f\,:\, (X,\Lambda)\rightarrow \mathbb{R}$:
\be
f(X,\Lambda)=\begin{cases}1, \quad X\cap \partial\Lambda \neq \emptyset\\
0, \quad X\cap \partial\Lambda = \emptyset
\end{cases}
\label{deff}
\ee

Eqs. \eqref{algebra} and \eqref{compactalgebra} are telling us that if $X$ intersects the boundary of $\Lambda$, the expectation value of $T_\Lambda$ on $R_{X}(\sigma^{\otimes 2})$ becomes the linear combination of the expectation values on the original state $\sigma^{\otimes 2}$ of swap operators with different boundaries, namely  $\Lambda \cup X$ and $\Lambda \cap X$. If $X$ is completely inside or outside $\Lambda$, the expectation value is unchanged. Thus, thanks to the duality in Eq.\eqref{heisenberg} we can write the expectation value of $R_{X}(T_{\Lambda})$ on $\sigma^{\otimes 2}$ as a linear combination of the purity of $\sigma$ in different domains:
\be
\aver{R_{X}(T_{\Lambda})}_{\sigma^{\otimes 2}}=(1-f(X,\Lambda))P_{\Lambda}(\sigma)+f(X,\Lambda)[N_{d_{\Lambda\cup X}}P_{\Lambda\cup X}(\sigma)+N_{d_{\Lambda\setminus X}}P_{\Lambda\setminus X}(\sigma)]
\ee
where $P_{\Lambda}(\sigma)=\aver{T_{\Lambda}}_{\sigma^{\otimes 2}}$ etc. Now, in order to generalize the above discussion to more than one domain $X$, consider the action of the quantum map $\mathcal{R}_{S}(\cdot)$ defined in Eq.\eqref{rsdef}. Although $\mathcal{R}_{S}(\cdot)$ is no more hermitian, we can exploit the duality as well and write:
\be
\aver{T_{\Lambda}}_{\mathcal{R}_{S}(\sigma^{\otimes 2})}=\aver{\mathcal{R}_{S}^{\dag}(T_{\Lambda})}_{\sigma^{\otimes 2}}
\ee
where $\mathcal{R}_{S}^{\dag}(\cdot)=R_{\tilde{X}_1}\cdots R_{\tilde{X}_k}(\cdot)$. As one can see the adjoint operator $\mathcal{R}^{\dag}_{S}(\cdot)$ is always the same operator $\mathcal{R}(\cdot)$ with a different ordering of the domains of $S$. Thus, defining the ordered subset  
\be \label{bars}
\bar{S}=\{\tilde{X}_{k},\dots, \tilde{X}_{1}\, |\tilde{X}_{i}\in S\}
\ee
we can write $\mathcal{R}^{\dag}_{S}(\cdot)=\mathcal{R}_{\bar{S}}(\cdot)$. Here \textit{ordered} means that, given $\tilde{X}_{i},\tilde{X}_{j}\in \bar{S}$ with $i>j$, the map $R_{\tilde{X}_j}(\cdot)$ acts after the map $R_{\tilde{X}_i}(\cdot)$. In order to avoid confusion, let us re-define the subsets as $X_{j}=\tilde{X}_{k+1-j}$, so that $\bar{S}=\{X_{1},\dots, X_{k}\,|\, X_{j}=\tilde{X}_{k+1-j}, \,\tilde{X}_{k+1-j}\in S\}$ and $\mathcal{R}_{\bar{S}}(T_{\Lambda})=R_{X_k}\cdots R_{X_1}(\cdot)$. By duality, the expectation value of $\mathcal{R}_{S}(T_{\Lambda})$ is always a linear combination of purities of $\sigma$:
\be
\aver{T_{\Lambda}}_{\mathcal{R}_{S}(\sigma^{\otimes 2})}=\aver{\mathcal{R}_{\bar{S}}(T_{\Lambda})}_{\sigma^{\otimes 2}}=\sum_{\Lambda_{\alpha}\in\mathcal{Y}^{(k)}(\Lambda)}m_{\Lambda_{\alpha}}P_{\Lambda_{\alpha}}(\sigma)
\label{compact}
\ee
where we defined the set of domains $\mathcal{Y}^{(k)}(\Lambda):=\{\Lambda,\Lambda\cup X_{1},\Lambda\cup X_{2},\dots, \Lambda\cup X_{1}\setminus X_{2},\dots\}$; in the r.h.s of \eqref{compact} the coefficients $m_{\Lambda_{\alpha}}$ depend on the particular choice of the ordered string $S$. In order to make the notation clearer, let us write the expression for $k=2$ explicitly:
\ba
\aver{\mathcal{R}_{\bar{S}}(T_{\Lambda})}_{\sigma^{\otimes 2}}&\equiv&\aver{R_{X_2}R_{X_1}(T_{\Lambda})}_{\sigma^{\otimes 2}}=(1-f(X_1,\Lambda))(1-f(X_2,\Lambda))P_{\Lambda}(\sigma)\nonumber\\&+&(1-f(X_1,\Lambda))f(X_{2},\Lambda)[N_{d_{\Lambda\cup X_2}}P_{\Lambda\cup X_{2}}(\sigma)+N_{d_{\Lambda\setminus X_2}}P_{\Lambda\cup X_{2}}(\sigma)]\nonumber\\
&+&f(X_{1},\Lambda)(1-f(X_{2},\Lambda\cup X_1))N_{d_{\Lambda\cup X_1}}P_{\Lambda\cup X_1}(\sigma)\nonumber\\&+&
f(X_{1},\Lambda)(1-f(X_{2},\Lambda\setminus X_1))N_{d_{\Lambda\setminus X_1}}P_{\Lambda\setminus X_1}(\sigma)\\&+&
f(X_{1},\Lambda)f(X_{2},\Lambda\cup X_{2})(N_{d_{\Lambda\cup X_1}}N_{d_{\Lambda\cup X_2}}P_{\Lambda\cup X_1\cup X_2}(\sigma)+N_{d_{\Lambda\cup X_1}}N_{d_{\Lambda\setminus X_2}}P_{\Lambda\cup X_1\setminus X_2}(\sigma))\nonumber\\&+&
f(X_{1},\Lambda)f(X_{2},\Lambda\setminus X_{2})(N_{d_{\Lambda\setminus X_1}}N_{d_{\Lambda\cup X_2}}P_{\Lambda\setminus X_1\cup X_2}(\sigma)+N_{d_{\Lambda\setminus X_1}}N_{d_{\Lambda\setminus X_2}}P_{\Lambda\setminus X_1\setminus X_2}(\sigma))\nonumber
\label{examplef}
\ea
where $\bar{S}=\{X_1,X_2\}$ and $m_{\Lambda}\equiv(1-f(X_1,\Lambda))(1-f(X_2,\Lambda))$, $m_{\Lambda\setminus X_1\setminus X_2}\equiv N_{d_{\Lambda\setminus X_1}}N_{d_{\Lambda\setminus X_2}}f(X_{1},\Lambda)f(X_{2},\Lambda\setminus X_{2})$, etc. 
The model is completely general: once one has chosen the string $S$ and the domain $\Lambda$ the functions $f$ are is either $0$ or $1$ according to the definition in Eq.\eqref{deff}. It is worth noting that the ordering of the domains $X_{i}\in \bar{S}$ is very important; consider $X_{1},X_{2}\in \bar{S}$ and note it can be the case that $X_{2}\cap \partial(\Lambda/X_{1})\neq 0$ while $X_{2}\cap \partial\Lambda=0$ and so in the if $X_{2}$ acts before $X_{1}$ it does not have any effect, see Fig.\ref{ORDERING} for a pictorial proof.
\begin{figure}[H]
    \centering
    \resizebox{9cm}{!}{\includegraphics[scale=0.6]{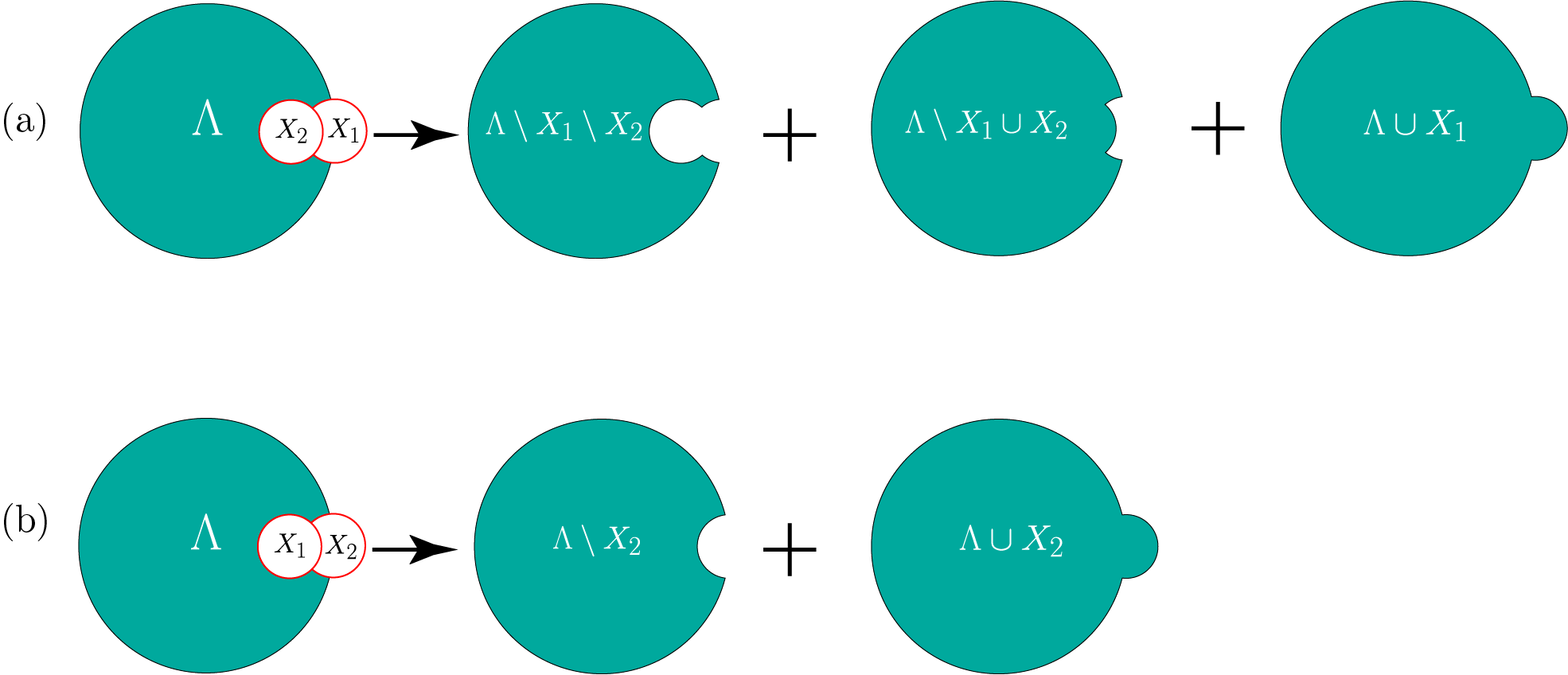}}
    \caption{The figure shows how different orderings of the same domains $X_{i}\in \bar{S}$ can give different results. In $(a)$, $\bar{S}_{(a)}=\{X_{1},X_{2}\}$ and thus we have $f(X_1,\Lambda)=1$, $f(X_{2},\Lambda\setminus X_{1})=1$, $f(X_{2},\Lambda\cup X_{1})=f(\Lambda,X_{2})=0$; therefore $\aver{R_{\bar{S}_{(a)}}(T_{\Lambda}}_{\sigma^{\otimes 2}}=N_{\Lambda\setminus X_{1}}P_{\Lambda \cup X_{1}}(\sigma)+N_{\Lambda\setminus X_{1}}N_{\Lambda\setminus X_{1}\setminus X_{2}}P_{\Lambda\setminus X_{1}\setminus X_{2}}(\sigma)+N_{\Lambda\setminus X_{1}}N_{\Lambda\setminus X_{1}\cup X_{2}}P_{\Lambda\setminus X_{1}\cup X_{2}}(\sigma)$, cfr. Eqs. \eqref{deff} and \eqref{examplef}. In $(b)$, $\bar{S}_{(b)}=\{X_{2},X_{1}\}$ and thus we have $f(X_{1},\Lambda)=0$, $f(X_{2},\Lambda)=1$; therefore $\aver{R_{\bar{S}_{(b)}}(T_{\Lambda}}_{\sigma^{\otimes 2}}=N_{\Lambda\setminus X_{2}}P_{\Lambda\setminus X_{2}}(\sigma)+N_{\Lambda \cup X_{2}}P_{\Lambda \cup X_{2}}(\sigma)$.}
    \label{ORDERING}
\end{figure}



\subsection{Proof of the main result}\label{mainresult}
This section is devoted to the proof of the main result of the paper. In virtue of the above section, we can re-write Eq.\eqref{toppurityforPsiS} for the topological purity of the state $\Psi_{S}=\mathcal{R}_{S}(\Psi_{0}^{\otimes 2})$ in terms of expectation values of evolved swap operators $\mathcal{R}_{S}(T_{\Lambda})$ for $\Lambda$ being $AB, BC, B$ and  $ABC$, namely:
\be
\tilde{P}_{top}(\Psi_{S})=\frac{\aver{\mathcal{R}_{\bar{S}}(T_{AB})}_{\Psi_{0}^{\otimes 2}}\aver{\mathcal{R}_{\bar{S}}(T_{BC})}_{\Psi_{0}^{\otimes 2}}}{\aver{\mathcal{R}_{\bar{S}}(T_{B})}_{\Psi_{0}^{\otimes 2}}\aver{\mathcal{R}_{\bar{S}}(T_{ABC})}_{\Psi_{0}^{\otimes 2}}}
\label{toppurityduality}
\ee
Before moving on to the formal proof, it is useful to understand how the topological purity $\tilde{P}_{top}(\Psi_{S})$ changes as a function of $S\in \mathcal{S}$. To do this, it is useful to recall that, the result of the action of $\mathcal{R}_{\bar S}(\cdot)$ on the swap operator $T_{\Lambda}$ is a linear combination of swap operators defined on different domains $\Lambda_{\alpha}$ obtained from $\Lambda$ by local boundary modifications. Since the topological purity $\tilde{P}_{top}(\Psi_{S})$ is given by the product (and ratio) of expectation values of four swap operators $T_{AB}, T_{BC}, T_{B}, T_{ABC}$ (cfr. Eq.\eqref{toppurityduality}), the boundary modifications can cancel each other and thus not every boundary modification changes the topological purity, suggesting instead the existence of strings $S$ able to preserve it. Because of Lemma \ref{lemma}, boundary modifications that do not change the topology of the region $ABC$ cannot change the topological purity. At the same time, some topologically non-trivial modifications can preserve the TP as well. Consider for example Fig.\ref{figsasa} $(a)$ and $(b)$: those boundary modifications, although changing the topology of the region $ABC$ return the same TP. As one expects, among those modifications changing the topology of $ABC$, there are ones breaking the topological purity, e.g. see Fig.\ref{figsasa} $(c)$ and $(d)$. Therefore, our main concern is to identify which topologically non trivial boundary modifications preserve the TP and which ones break it. One can observe by looking at Fig.\ref{figsasa} that: boundary modifications able to return a different topological purity are of two kinds: the ones that cut apart the donut shape of the region $ABC$ connecting the two disconnected $D$ regions Fig.\ref{figsasa} $(c)$ and the ones that connect two non-contiguous regions, e.g. $B_{left}$ and $B_{right}$ Fig.\ref{figsasa} $(d)$, because for these cases only we have that $\Gamma_{A^{\prime}B^{\prime}}+\Gamma_{B^{\prime}C^{\prime}}-\Gamma_{B^{\prime}}-\Gamma_{A^{\prime}B^{\prime}C^{\prime}}\neq \Gamma_{AB}+\Gamma_{BC}-\Gamma_{B}-\Gamma_{ABC}$, where $A^{\prime}, B^{\prime}$ and $C^{\prime}$ are the modified regions.
\begin{figure}[H]
    \centering
    \resizebox{12cm}{!}{\includegraphics[scale=0.6]{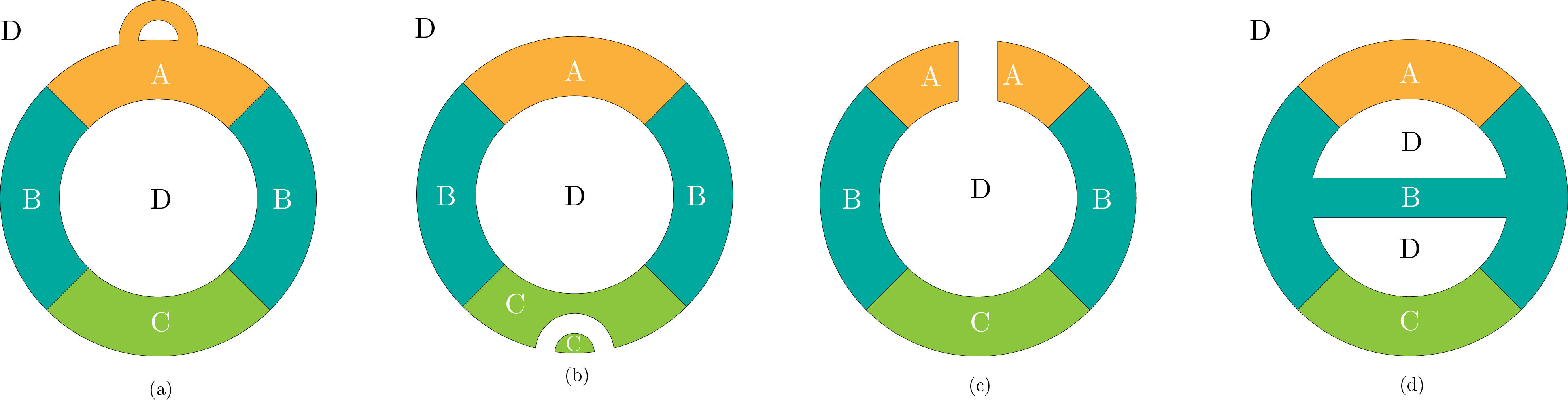}}
    \caption{An illustration of how some modifications of the boundary given in Fig.\ref{fig1} and used for the computation of the TP can either affect the TP or not. In particular, $(a)$ and $(b)$ show the same topological purity of Fig.\ref{fig1}$(a)$, while $(c)$ and $(d)$ do not. $(a)$: we have $n_{\partial}(AB)=n_{\partial}(B)=2$, $n_{\partial}(ABC)=3$, $n_{\partial}(BC)=1$ and thus $P_{top}=2^{-2\gamma}$. $(b)$: we have $n_{\partial}(AB)=1$, $n_{\partial}(BC)=n_{\partial}(B)=2$, $n_{\partial}(ABC)=3$ and thus $P_{top}=2^{-2\gamma}$. $(c)$: we have $n_{\partial}(AB)=n_{\partial}(B)=2$ and $n_{\partial}(BC)=n_{\partial}(ABC)=1$ and thus $P_{top}=1$; $(d)$: we have $n_{\partial}(AB)=n_{\partial}(BC)=2$, $n_{\partial}(B)=1$, $n_{\partial}(ABC)=3$ and thus $P_{top}=1$.}
    \label{figsasa}
\end{figure}

Equipped with some intuitions of what is going on, let us now consider a string $S$. Because of the importance of the ordering of the subsets(cfr. Sec.\ref{model}) we can focus on \textit{ordered and contiguous} substrings $S\supset S^{\prime}=\{Y_{1},\ldots Y_{|S^{\prime}|}\,|\, Y_{i}\in S\}$. Ordered and contiguous means: $Y_{i-1}\cap \partial Y_{i}\neq \emptyset$ and $Y_{i}\cap \partial Y_{i+1}\neq \emptyset\, \forall i.$
According to the rules described in the above section, the join of the domains $Y_{i}\in S^{\prime}$, $\mathcal{Y}\equiv\bigcup_{i\in S^{\prime}}Y_{i}$, can result in either (i) cancel part of the region $ABC$ and cut apart its donut shape, as $\Lambda\setminus \mathcal{Y}$ or (ii) connect two non-contiguous parts of $ABC$, as $ABC\cup\mathcal{Y}$. Of course can exist many substrings of $S$ that, joining together, result $(i)$ or $(ii)$. The tale we presented above is the core of the proof of the main theorem; conditions $(i)$ and $(ii)$ are sketched in Figs. \ref{conditioni}, \ref{conditionii} and \ref{sasa} and summarized more rigorously in Definition \ref{definS}.
\begin{defin}\label{definS}
$\mathcal{S}\supset\tilde{\mathcal{S}}=\{S=\{\tilde{X}_1,\dots, \tilde{X}_{k}\}\}$ such that $\forall S\in\mathcal{S}$ the following is not satisfied: 
\begin{enumerate}[(I)]
    \item given $\bar{S}=\{X_{1},\dots, X_{k}\,|\, \tilde{X}_{i}\in S,\,            Z_{j}=\tilde{X}_{k+1-j}\}$ defined in Eq.\eqref{bars}, there exist two ordered subsets $ \bar{S} \supset     S^{\prime}=\{Y_{1},\dots, Y_{|S^{\prime}|}\,|\, Y_{i}\in \bar{S}\}$ and $\bar{S} \supset S^{\prime\prime}=\{Z_{1},\dots, Z_{|S^{\prime\prime}|}\,|\, Z_{i}\in \bar{S}\}$ where $Y_{i-1}\cap \partial Y_{i}, X_{i}\cap \partial Y_{i+1}\neq \emptyset$, $Z_{i-1}\cap \partial Z_{i}, Z_{i}\cap \partial Z_{i+1}\neq \emptyset$ $\forall i$ and $Y_{|S^{\prime}|}\cap Z_{|S^{\prime\prime}|}\neq \emptyset$ that fulfill one of the following conditions (see Figs.\ref{conditioni} and \ref{conditionii}  for a pictorial overview):
\begin{enumerate}[(i)]
    \item  $Y_{1}\cap \partial(ABC)_{1}\neq \emptyset$, $Z_{1}\cap \partial(ABC)_{2}\neq \emptyset$,
    \item $ Y_{1}\cap \partial(\Lambda_{\alpha}) \neq \emptyset$, $Z_{1}\cap \partial(\Lambda_{\beta})\neq \emptyset$, for $(\Lambda_{\alpha},\Lambda_{\beta})$ being either $(A,C)$ or $(B_{left},B_{right})$.
\end{enumerate}
\item given $\bar{S}=\{X_{1},\dots, X_{k}\,|\, \tilde{X}_{i}\in S,\,            Z_{j}=\tilde{X}_{k+1-j}\}$ defined in Eq.\eqref{bars}, there exist an ordered subset $ \bar{S} \supset     S^{\prime}=\{Y_{1},\dots, Y_{|S^{\prime}|}\,|\, Y_{i}\in \bar{S}\}$ where $Y_{i-1}\cap \partial Y_{i}, X_{i}\cap \partial Y_{i+1}\neq \emptyset$ that fulfill one of the following conditions:
\begin{enumerate}
    \item  $Y_{1}\cap \partial(ABC)_{1}\neq \emptyset$, $Y_{|S^{\prime}|}\cap \partial(ABC)_{2}\neq \emptyset$, 
    \item $ Y_{1}\cap \partial(\Lambda_{\alpha}) \neq \emptyset$, $Y_{|S^{\prime}|}\cap \partial(\Lambda_{\beta})\neq \emptyset$ for $(\Lambda_{\alpha},\Lambda_{\beta})$ being either $(A,C)$ or $(B_{left},B_{right})$.
\end{enumerate} 
\end{enumerate}
\end{defin}
\begin{figure}[H]
    \centering
    \resizebox{10cm}{!}{\includegraphics[scale=0.07]{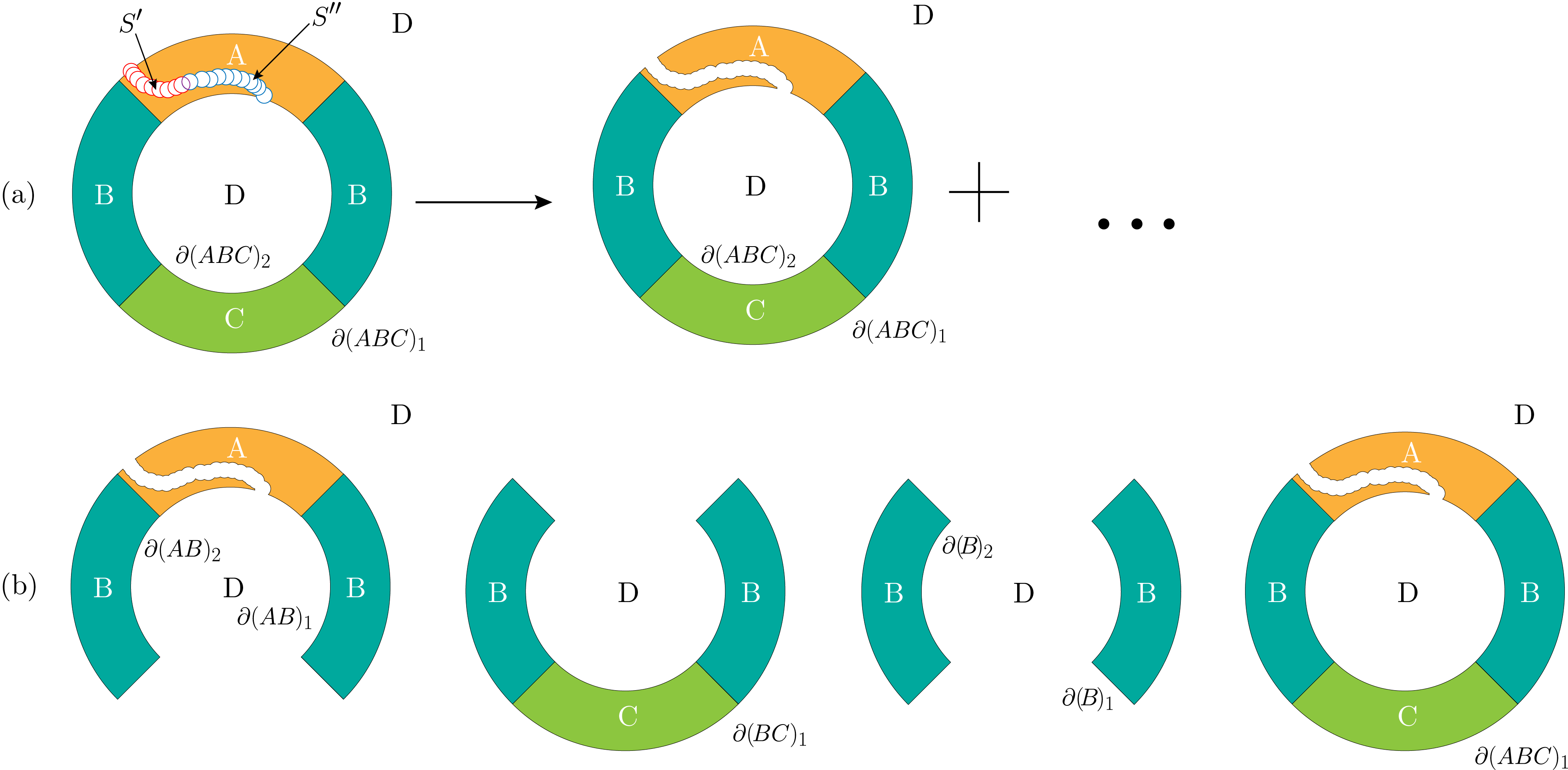}}
   \caption{An illustration for the condition $(I)$ in Definition \ref{definS}. One of the result of the union of the two subsets $S^{\prime}$ and $S^{\prime\prime}$ is the boundary modification displayed on the r.h.s of $(a)$. Such boundary modification cuts apart the donut shape of $ABC$ and changes the topological purity. To see this, define $\mathcal{Y}=\bigcup_{S^\prime} Y_{i}$ and $\mathcal{Z}=\bigcup_{S^{\prime\prime}} Z_{i}$, consider $(b)$ and note that $n_{\partial} (AB\setminus\mathcal{Y}\setminus\mathcal{Z})=n_{\partial}(B)=2$, $n_{\partial}(ABC\setminus \mathcal{Y}\setminus\mathcal{Z})=n_{\partial}(BC)=1$, therefore $\Gamma_{(AB\setminus \mathcal{Y}\setminus\mathcal{Z})}+\Gamma_{BC}-\Gamma_{B}-\Gamma_{(ABC\setminus \mathcal{Y}\setminus\mathcal{Z})}\neq 2\gamma $. Moreover, note that such boundary modification transforms $ABC$ in a simply connected region, that, as claimed, it cannot catch the topological purity, cfr. Fig.\ref{fig1} $(a)$.}
    \label{conditioni}
\end{figure}
\begin{figure}[H]
\centering
\resizebox{10cm}{!}{\includegraphics[scale=0.07]{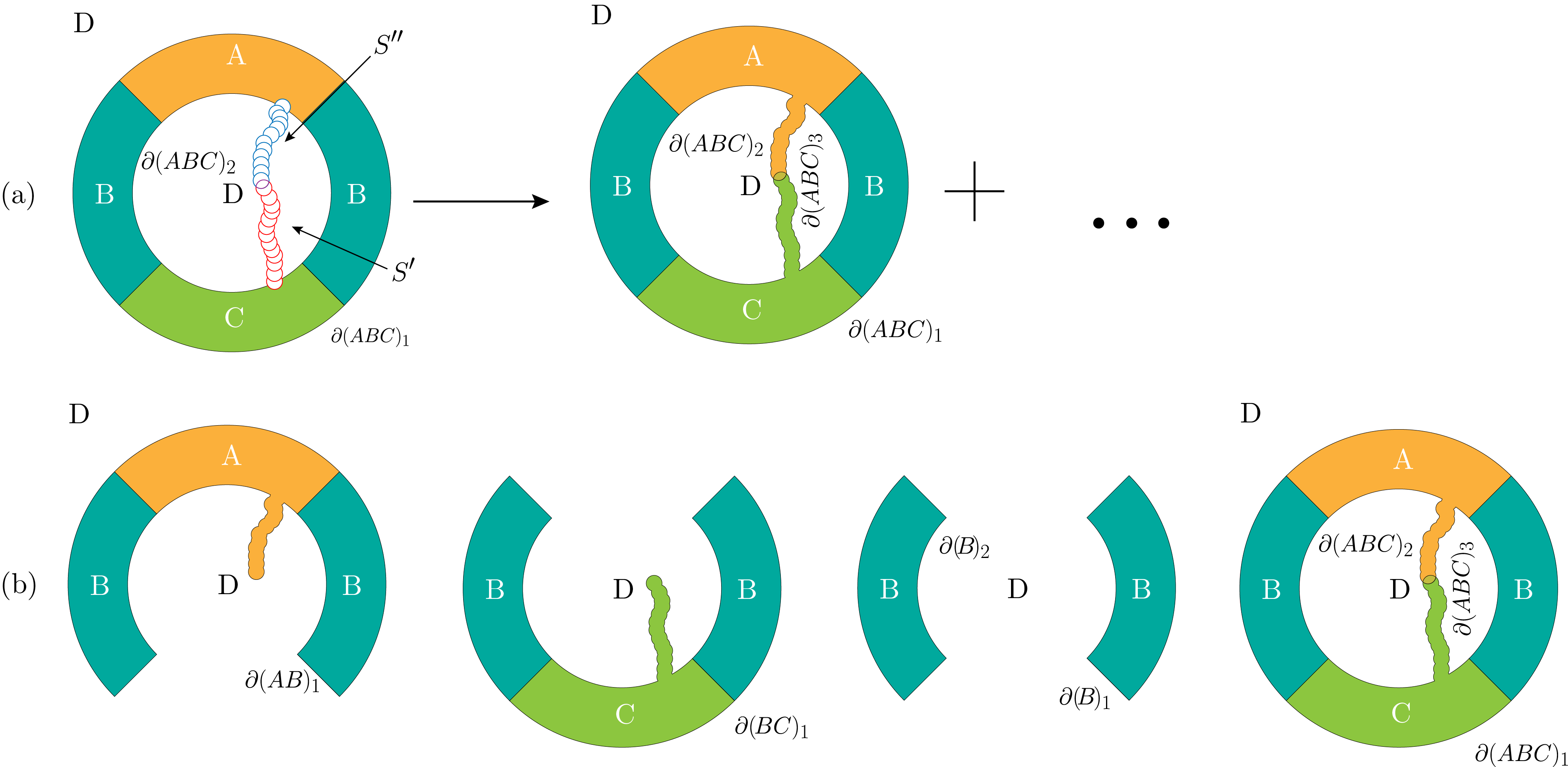}}
\caption{An illustration for the condition $(II)$ in Definition \ref{definS}. One of the result of the union of the two subsets $S^{\prime}$ and $S^{\prime\prime}$ is the boundary modification displayed on the r.h.s of $(a)$. Such boundary modification connects the regions $A$ and $C$ and therefore it changes the topological purity. To see this, define $\mathcal{Y}=\bigcup_{S^{\prime}} Y_{i}$ and $\mathcal{Z}=\bigcup_{S^{\prime\prime}} Z_{i}$, consider $(b)$ and note that $n_{\partial} (AB\cup\mathcal{Z})=n_{\partial} (BC\cup\mathcal{Y})=1$, $n_{\partial} (ABC\cup \mathcal{Y}\cup\mathcal{Z})=3$, $n_{\partial}(B)=2$, therefore $\Gamma_{(AB \cup\mathcal{Z})}+\Gamma_{(BC\cup\mathcal{Y})}-\Gamma_{B}-\Gamma_{(ABC\cup \mathcal{Y}\cup\mathcal{Z})}\neq 2\gamma $.}
\label{conditionii}
\end{figure}
To prove the main theorem, we need to prove that, for any $S\in \tilde{S}$, the topological purity keeps the value of the topological purity of the initial state, i.e. the ratio of expectation values of evolved swap operators in Eq.\eqref{toppurityforPsiS} equals the ratio of expectation values of $T_{AB}, T_{BC}, T_{B}, T_{ABC}$. In particular we consider two states, the ground state of the quantum double model and a topologically trivial state. If the initial state $\ket{\Psi_{0}}\in\mathcal{L}$ is the ground state of the quantum double model (cfr. Sec.\ref{sec:QuantumDouble} ) then $\tilde{P}_{top}(\Psi_{0}^{\otimes 2})={P}_{top}(\Psi_{0})=2^{-2\gamma}$, while if the initial state is a pure and topologically trivial state such as $\Phi$, then $ \tilde{P}_{top}(\Phi^{\otimes 2})={P}_{top}(\Phi^{\otimes 2})=1$.

\begin{proof} 
Consider $\ket{\Psi_{0}}\in\mathcal{L}$ and let us compute the topological purity of $\Psi_{S}\equiv \mathcal{R}_{S}(\Psi_{0}^{\otimes 2})$ for $S\in\tilde{\mathcal{S}}$. According to Eq.\eqref{toppurityduality} we can directly compute the expectation values of the evolution of the swap operators $T_{\Lambda}$ for $\Lambda=(AB,BC,B,ABC)$. Recalling Eq.\eqref{compact} the expectation value of the evolved swap operator $T_{\Lambda}$ is a linear combination of purities of $\Psi_{0}$ in domains $\Lambda_{\alpha}\in\mathcal{Y}_{k}(\Lambda)$: 
\be 
\aver{\mathcal{R}_{\bar{S}}(T_{\Lambda})}_{\Psi_{0}^{\otimes 2}}=\sum_{\Lambda_\alpha\in\mathcal{Y}_{k}(\Lambda)}m_{\Lambda_\alpha}P_{\Lambda_{\alpha}}(\Psi_{0})
\ee 
According to Eq.~\eqref{purity}, any purity term $P_{\Lambda_{\alpha}}(\Psi_0)$, for $\Lambda_{\alpha}\in\mathcal{Y}_{k}(\Lambda)$ equals to
\be
P_{\Lambda_\alpha}(\Psi_{0})=2^{-\log_2 |d\partial\Lambda_\alpha|+\Gamma_{\Lambda_\alpha}}
\ee
where $|\partial\Lambda_\alpha|$ is the boundary length of $\Lambda_\alpha$ and $\Gamma_{\Lambda_\alpha}$ is the corresponding topological term.
Then Eq.\eqref{toppurityduality} can be expressed as
\begin{equation}\label{Eq:mtoppure:expan}
\begin{aligned}
\tilde{P}_{top}(\Psi_S)=\frac{ \sum_{\alpha} m_{AB_\alpha} P_{AB_\alpha} \sum_\beta m_{BC_\beta} P_{BC_\beta}}{\sum_\eta m_{B_\eta} P_{B_\eta} \sum_\zeta m_{ABC_\zeta} P_{ABC_\zeta}}
\end{aligned}
\end{equation}
where we adopted a compact notation for the sum, namely $\sum_{\alpha}\equiv \sum_{AB_{\alpha}\in\mathcal{Y}_{k}(AB)}$ etc. In order to illustrate how the previous relation works, let us first make a preliminary example and consider a single domain $X$, so $S=\bar{S}=\{X\}$ and $\mathcal{R}_{S}(\cdot)=R_{X}(\cdot)$, cfr. Sec.\ref{model}; let $X$ be the domain sketched in Fig.\ref{fig11}, i.e $X\cap (\partial B)_{2}\neq 0$, we have:
\ba
\aver{R_{X}(T_{AB})}_{\Psi_{0}^{\otimes 2}}&=&N_{d}(P_{AB\cup X}+P_{AB/X})\equiv N_{d}2^{-|\partial (AB \cup X)|+\Gamma_{AB\cup X}}+N_{d}2^{-|\partial (AB / X)|+\Gamma_{AB/ X}}\nonumber\\
\aver{R_{X}(T_{BC})}_{\Psi_{0}^{\otimes 2}}&=&P_{BC}=2^{-|\partial (BC)|+\Gamma_{BC}}\nonumber\\
\aver{R_{X}(T_{B})}_{\Psi_{0}^{\otimes 2}}&=&N_{d}(P_{B\cup X}+P_{B/X})\equiv N_{d}2^{-|\partial (B \cup X)|+\Gamma_{B\cup X}}+N_{d}2^{-|\partial (B / X)|+\Gamma_{B/ X}}\nonumber\\
\aver{R_{X}(T_{ABC})}_{\Psi_{0}^{\otimes 2}}&=&P_{ABC}=2^{-|\partial (ABC)|+\Gamma_{ABC}}
\ea
the topological purity is given by their ratio:
\be
\tilde P_{top}^{(1)}=\frac{N_{d}(2^{-|\partial (AB \cup X)|-|\partial (BC)|+\Gamma_{AB\cup X}+\Gamma_{BC}}+2^{-|\partial (AB / X)|-|\partial (BC)|+\Gamma_{AB/ X}+\Gamma_{BC}})}{N_{d}(2^{-|\partial (B \cup X)|-|\partial (ABC)|+\Gamma_{B\cup X}+\Gamma_{ABC}}+2^{-|\partial (B / X)|-|\partial (ABC)|+\Gamma_{B/ X}+\Gamma_{ABC}})}=2^{-2\gamma}
\ee
where we used the following relations:
\ba
|\partial (AB \cup X)|+|\partial (BC)|&=&|\partial (B \cup X)|+|\partial (ABC)| \nonumber\\
|\partial (AB / X)|+|\partial (BC)|&=&|\partial (B / X)|+|\partial (ABC)| \nonumber\\
\Gamma_{AB\cup X}+\Gamma_{BC}-\Gamma_{B\cup X}+\Gamma_{ABC}&=&-2\gamma\nonumber\\
\Gamma_{AB/ X}+\Gamma_{BC}-\Gamma_{B/ X}+\Gamma_{ABC}&=&-2\gamma\nonumber\\
\tilde P_{top}(\Psi_{0}^{\otimes 2})&=&2^{-2\gamma}
\ea
In fact, for any two terms $m_{B_\eta} P_{B_\eta}$ and $m_{ABC_\zeta} P_{ABC_\zeta}$ chosen from the denominator of Eq~\eqref{Eq:mtoppure:expan}, one can always find two one-to-one corresponding terms $m_{AB_\alpha} P_{AB_\alpha}$ and $m_{BC_\beta} P_{BC_\beta}$ from the numerator, following the rule that the boundaries of them share the same trajectory of boundary statistic dynamics (see Fig~\ref{fig11} for the concrete rule). 
\begin{figure}[H]
\centering
\resizebox{10cm}{!}{\includegraphics[scale=0.6]{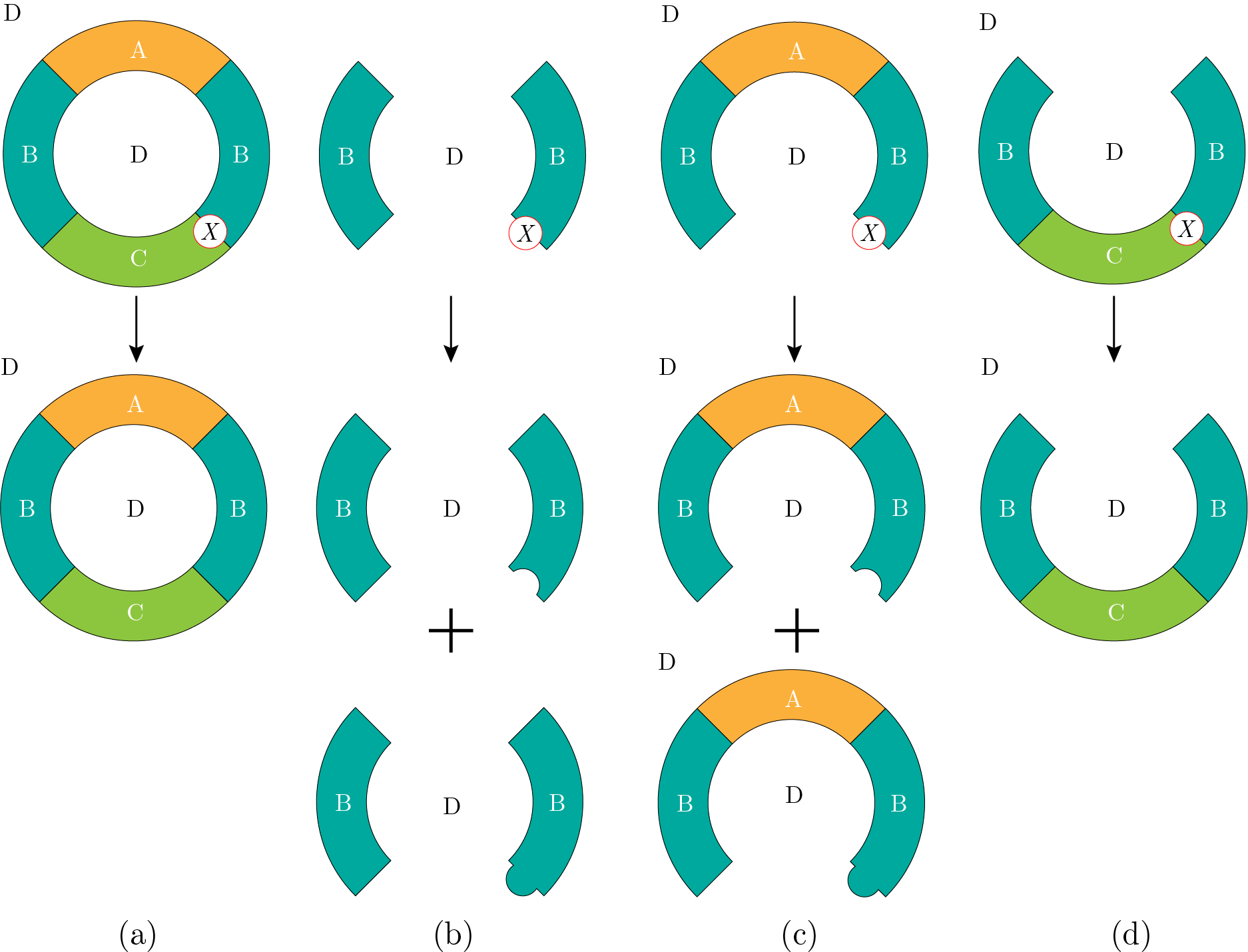}}
\caption{An illustration for the single domain noise $R_{X}(\cdot)$. The figure is a pictorial representation of the action of $R_{X}(\cdot)$ for $X\cap \partial B\neq \emptyset$ and $X\cap \partial C\neq \emptyset$ on $(a)$ $T_{ABC}$, $(b)$ $T_{B}$, $(c)$ $T_{AB}$, $(d)$ $T_{BC}$. }
\label{fig11}
\end{figure}
As a consequence, it is not hard to find that the boundary length in purity expressions can cancel with each other $\partial AB_\alpha+\partial BC_\beta=\partial B_\eta+\partial ABC_\zeta$, and the corresponding accumulative coefficients fulfill $m_{AB_\alpha} m_{BC_\beta}=m_{B_\eta} m_{ABC_\zeta}$. In union with the fact that the topological terms keep the relation $(\Gamma_{B_\eta}+\Gamma_{ABC_\zeta})-(\Gamma_{AB_\alpha}+\Gamma_{BC_\beta})=2\gamma$ if $S\in \tilde{\mathcal{S}}$ since the perturbation can not cut apart the donut shape of system $ABC$, we can reach the conclusion that
\be
\tilde P_{top}(\Psi_{S})=\tilde P_{top}(\Psi_{0}^{\otimes 2}), \quad \forall S \in \tilde{\mathcal{S}}
\ee
The proof for $\Phi$ being a pure and topologically trivial state is identical to the one presented above, with the only difference that $P_{\Lambda_{\alpha}}(\Phi)=\aver{T_{\Lambda_{\alpha}}}_{\Phi^{\otimes 2}}=1$ for any $\Lambda_{\alpha}\in\mathcal{Y}_{k}(\Lambda)$, cfr. Eq.\eqref{compact}.  This concludes the proof.
\end{proof}
\begin{remark*}
One wonders whether there are other boundary modifications that destroy the donut shape that can change the topological purity other than the one of conditions $(I)$ and $(II)$. The answer is no: consider for example the ordered sets of domains $S^{\prime}$ and $S^{\prime\prime}$ in Fig.\ref{sasa} $(a)$. One of the result of their application is the boundary modification displayed in Fig.\ref{sasa} $(b)$. Although it destroys the donut shape of $ABC$ it does not change the topological purity, indeed $n_{\partial}(AB)=n_{\partial}(B)=2$, $n_{\partial}(BC)=1$, $n_{\partial}(ABC)=3$ and therefore  $\Gamma_{(AB\cup\mathcal{Z})}+\Gamma_{(BC\cup\mathcal{Y})}-\Gamma_{B\cup\mathcal{Y}}-\Gamma_{(ABC\cup \mathcal{Y}\cup\mathcal{Z})}\neq 2\gamma $.
\end{remark*}
\begin{figure}[H]
\centering
\resizebox{10cm}{!}{\includegraphics[scale=0.6]{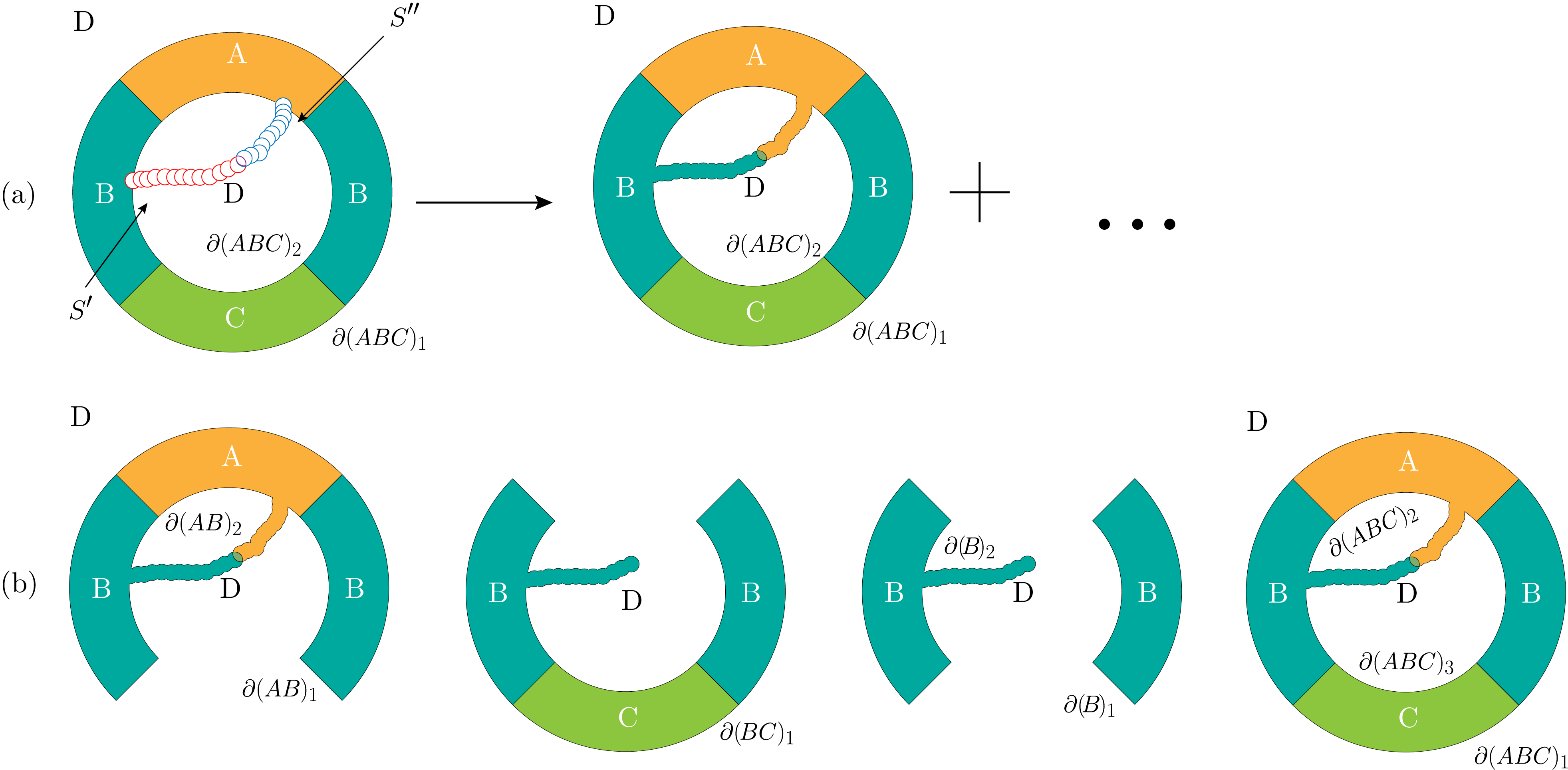}}
\caption{An illustration of a boundary modification that destroys the donut shape without changing the topological purity. In more details, $(a)$ the application of the random circuit described by the two ordered subsets $S^{\prime}$ and $S^{\prime\prime}$ give rise, in the worst-case scenario, to an intersection between the internal boundary of $A$ and the internal boundary of $B$. Although this operation breaks the donut shape, the topological purity keeps steady, to see this define $\mathcal{Y}=\bigcup_{S^\prime} Y_{i}$ and $\mathcal{Z}=\bigcup_{S^{\prime\prime}} Z_{i}$, consider $(b)$ and note that $n_{\partial} (AB\cup\mathcal{Y}\cup\mathcal{Z})=n_{\partial}(B\cup\mathcal{Y})=2$, $n_{\partial\cup\mathcal{Y}}(BC)=1$, $n_{\partial}(ABC\cup\mathcal{Y}\cup\mathcal{Z})=3$ and thus $\Gamma_{(AB\cup\mathcal{Y}\mathcal{Z})}+\Gamma_{(BC\cup\mathcal{Y})}-\Gamma_{B\cup\mathcal{Y}}-\Gamma_{(ABC\cup \mathcal{Y}\cup\mathcal{Z})}= 2\gamma $.}
\label{sasa}
\end{figure}
\section{Conclusions}
In this paper, we addressed some questions regarding the stability of topological order under noisy perturbations, but the path to find a general analytic proof is still long and tortuous. Working with the ground state $\Psi_0$ of quantum double models, we defined a new detector for topological order - the topological purity $P_{top}$ - proving its robustness in two distinct phases, namely the topological phase and the trivial phase. More precisely, as a noise model, we introduced a set of quantum maps that mimics the evolution of local random quantum circuits. The two phases are indeed created by the quantum maps as orbits of two initially distinct states, the ground state of quantum double models $\Psi_0$ and a pure, topologically trivial state. We found that the topological purity attains two different constant values among such states, in particular $P_{top}<1$ for the topologically ordered phase and $P_{top}=1$ for the trivial phase. The dynamics of the topological purity under such noise model can be mapped onto dynamics for the subsystems used to define the topological purity, cfr. Sec.\ref{sec:PT}. This peculiarity enabled us to prove our main theorem and to provide many pictorial representations, giving the reader more intuition on the effects of the noisy dynamics.
Despite the generality of the setup, our result is not the final word regarding the stability of topological order, and even more general and complete proofs are necessary to go further in this direction. This paper opens a series of different questions that might be interesting to investigate. 
Our proof strongly relies on the fact that the purity of the ground state of the quantum double model is a function of the boundary only: is this the case for other models and, more generally, is our proof working for any topologically ordered state? So far we only investigated the stability of topological purity, strictly related to the order two R\'enyi entropy: it would be indeed interesting to find an analytic treatment for more general detectors of topological order. 
\section*{Acknowledgments} We acknowledge support from NSF award number 2014000. 


\begin{thebibliography}{10}
\providecommand{\url}[1]{\texttt{#1}}
\providecommand{\urlprefix}{URL }
\expandafter\ifx\csname urlstyle\endcsname\relax
  \providecommand{\doi}[1]{doi:\discretionary{}{}{}#1}\else
  \providecommand{\doi}{doi:\discretionary{}{}{}\begingroup
  \urlstyle{rm}\Url}\fi
\providecommand{\eprint}[2][]{\url{#2}}

\bibitem{Wen2007Quantum}
X.-G. Wen,
\newblock \emph{Quantum Field Theory of Many-Body Systems: From the Origin of
  Sound to an Origin of Light and Electrons},
\newblock Oxford Graduate Texts. Oxford University Press, Oxford,
\newblock ISBN 9780199227259,
\newblock \doi{10.1093/acprof:oso/9780199227259.001.0001} (2007).

\bibitem{Kitaev2003fault}
A.~Y. Kitaev,
\newblock \emph{Fault-tolerant quantum computation by anyons},
\newblock Annals of Physics \textbf{303}(1), 2 (2003),
\newblock \doi{https://doi.org/10.1016/S0003-4916(02)00018-0}.

\bibitem{key1943131m}
M.~H. Freedman, A.~Kitaev \emph{et~al.},
\newblock \emph{Topological quantum computation},
\newblock Bulletin of the American Mathematical Society \textbf{40}(01), 31
  (2002),
\newblock \doi{10.1090/s0273-0979-02-00964-3}.

\bibitem{Hamma2005ground}
A.~Hamma, R.~Ionicioiu and P.~Zanardi,
\newblock \emph{Ground state entanglement and geometric entropy in the kitaev
  model},
\newblock Physics Letters A \textbf{337}(1), 22 (2005),
\newblock \doi{https://doi.org/10.1016/j.physleta.2005.01.060}.

\bibitem{Kitaev2006topological}
A.~Kitaev and J.~Preskill,
\newblock \emph{Topological entanglement entropy},
\newblock Physical Review Letters \textbf{96}, 110404 (2006),
\newblock \doi{10.1103/PhysRevLett.96.110404}.

\bibitem{Levin2006topological}
M.~Levin and X.-G. Wen,
\newblock \emph{Detecting topological order in a ground state wave function},
\newblock Physical Review Letters \textbf{96}, 110405 (2006),
\newblock \doi{10.1103/PhysRevLett.96.110405}.

\bibitem{Chung2010topo}
S.~B. Chung, H.~Yao \emph{et~al.},
\newblock \emph{Topological quantum phase transition in an exactly solvable
  model of a chiral spin liquid at finite temperature},
\newblock Physical Review B \textbf{81}, 060403 (2010),
\newblock \doi{10.1103/PhysRevB.81.060403}.

\bibitem{Mezzacapo12ground}
F.~Mezzacapo,
\newblock \emph{Ground-state phase diagram of the quantum
  ${J}_{1}\ensuremath{-}{J}_{2}$ model on the square lattice},
\newblock Physical Review B \textbf{86}, 045115 (2012),
\newblock \doi{10.1103/PhysRevB.86.045115}.

\bibitem{Fradkin2007topo}
S.~Papanikolaou, K.~S. Raman and E.~Fradkin,
\newblock \emph{Topological phases and topological entropy of two-dimensional
  systems with finite correlation length},
\newblock Physical Review B \textbf{76}, 224421 (2007),
\newblock \doi{10.1103/PhysRevB.76.224421}.

\bibitem{Zhang2011Topo}
Y.~Zhang, T.~Grover and A.~Vishwanath,
\newblock \emph{Topological entanglement entropy of ${\mathbb{z}}_{2}$ spin
  liquids and lattice laughlin states},
\newblock Physical Review B \textbf{84}, 075128 (2011),
\newblock \doi{10.1103/PhysRevB.84.075128}.

\bibitem{Jiang2012heisenberg}
H.-C. Jiang, H.~Yao and L.~Balents,
\newblock \emph{Spin liquid ground state of the spin-$\frac{1}{2}$ square
  ${J}_{1}$-${J}_{2}$ heisenberg model},
\newblock Physical Review B \textbf{86}, 024424 (2012),
\newblock \doi{10.1103/PhysRevB.86.024424}.

\bibitem{Jiang2012topological}
H.-C. Jiang, Z.~Wang and L.~Balents,
\newblock \emph{Identifying topological order by entanglement entropy},
\newblock Nature Physics \textbf{8}(12), 902–905 (2012),
\newblock \doi{10.1038/nphys2465}.

\bibitem{Furukawa2007topo}
S.~Furukawa and G.~Misguich,
\newblock \emph{Topological entanglement entropy in the quantum dimer model on
  the triangular lattice},
\newblock Physical Review B \textbf{75}, 214407 (2007),
\newblock \doi{10.1103/PhysRevB.75.214407}.

\bibitem{Isakov2011topo}
S.~V. Isakov, M.~B. Hastings and R.~G. Melko,
\newblock \emph{Topological entanglement entropy of a bose–hubbard spin
  liquid},
\newblock Nature Physics \textbf{7}(10), 772–775 (2011),
\newblock \doi{10.1038/nphys2036}.

\bibitem{Micallo2020topo}
T.~Micallo, V.~Vitale \emph{et~al.},
\newblock \emph{{Topological entanglement properties of disconnected partitions
  in the Su-Schrieffer-Heeger model}},
\newblock SciPost Physics Core \textbf{3}, 12 (2020),
\newblock \doi{10.21468/SciPostPhysCore.3.2.012}.

\bibitem{Gong2013honey}
S.-S. Gong, D.~N. Sheng \emph{et~al.},
\newblock \emph{Phase diagram of the spin-$\frac{1}{2}$ ${J}_{1}$-${J}_{2}$
  heisenberg model on a honeycomb lattice},
\newblock Physical Review B \textbf{88}, 165138 (2013),
\newblock \doi{10.1103/PhysRevB.88.165138}.

\bibitem{Dong_2008topo}
S.~Dong, E.~Fradkin \emph{et~al.},
\newblock \emph{Topological entanglement entropy in chern-simons theories and
  quantum hall fluids},
\newblock Journal of High Energy Physics \textbf{2008}(05), 016–016 (2008),
\newblock \doi{10.1088/1126-6708/2008/05/016}.

\bibitem{Trebst2007topo}
S.~Trebst, P.~Werner \emph{et~al.},
\newblock \emph{Breakdown of a topological phase: Quantum phase transition in a
  loop gas model with tension},
\newblock Physical Review Letters \textbf{98}, 070602 (2007),
\newblock \doi{10.1103/PhysRevLett.98.070602}.

\bibitem{Jadamagni2018robust}
A.~Jamadagni, H.~Weimer and A.~Bhattacharyya,
\newblock \emph{Robustness of topological order in the toric code with open
  boundaries},
\newblock Physical Review B \textbf{98}, 235147 (2018),
\newblock \doi{10.1103/PhysRevB.98.235147}.

\bibitem{Dusuel2011robust}
S.~Dusuel, M.~Kamfor \emph{et~al.},
\newblock \emph{Robustness of a perturbed topological phase},
\newblock Physical Review Letters \textbf{106}, 107203 (2011),
\newblock \doi{10.1103/PhysRevLett.106.107203}.

\bibitem{Bravyi2010topological}
S.~Bravyi, M.~B. Hastings and S.~Michalakis,
\newblock \emph{Topological quantum order: Stability under local
  perturbations},
\newblock Journal of Mathematical Physics \textbf{51}(9), 093512 (2010),
\newblock \doi{10.1063/1.3490195}.

\bibitem{Bravyi2011topo}
S.~Bravyi and M.~B. Hastings,
\newblock \emph{A short proof of stability of topological order under local
  perturbations},
\newblock Communications in Mathematical Physics \textbf{307}(3), 609 (2011),
\newblock \doi{10.1007/s00220-011-1346-2}.

\bibitem{Jahromi2013robust}
S.~S. Jahromi, M.~Kargarian \emph{et~al.},
\newblock \emph{Robustness of a topological phase: Topological color code in a
  parallel magnetic field},
\newblock Physical Review B \textbf{87}, 094413 (2013),
\newblock \doi{10.1103/PhysRevB.87.094413}.

\bibitem{Hamma2013local}
A.~Hamma, L.~Cincio \emph{et~al.},
\newblock \emph{Local response of topological order to an external
  perturbation},
\newblock Physical Review Letters \textbf{110}, 210602 (2013),
\newblock \doi{10.1103/PhysRevLett.110.210602}.

\bibitem{Hamma2014local}
S.~Santra, A.~Hamma \emph{et~al.},
\newblock \emph{Local convertibility of the ground state of the perturbed toric
  code},
\newblock Phys. Rev. B \textbf{90}, 245128 (2014),
\newblock \doi{10.1103/PhysRevB.90.245128}.

\bibitem{Flammia2009topological}
S.~T. Flammia, A.~Hamma \emph{et~al.},
\newblock \emph{Topological entanglement r\'enyi entropy and reduced density
  matrix structure},
\newblock Physical Review Letters \textbf{103}, 261601 (2009),
\newblock \doi{10.1103/PhysRevLett.103.261601}.

\bibitem{Amico2008Ent}
L.~Amico, R.~Fazio \emph{et~al.},
\newblock \emph{Entanglement in many-body systems},
\newblock Reviews of Modern Physics \textbf{80}, 517 (2008),
\newblock \doi{10.1103/RevModPhys.80.517}.

\bibitem{Horodecki2002method}
P.~Horodecki and A.~Ekert,
\newblock \emph{Method for direct detection of quantum entanglement},
\newblock Physical Review Letters \textbf{89}, 127902 (2002),
\newblock \doi{10.1103/PhysRevLett.89.127902}.

\bibitem{Hastings2010Renyi}
M.~B. Hastings, I.~Gonz\'alez \emph{et~al.},
\newblock \emph{Measuring renyi entanglement entropy in quantum monte carlo
  simulations},
\newblock Physical Review Letters \textbf{104}, 157201 (2010),
\newblock \doi{10.1103/PhysRevLett.104.157201}.

\bibitem{Enk2012measure}
S.~J. van Enk and C.~W.~J. Beenakker,
\newblock \emph{Measuring $\mathrm{Tr}{\ensuremath{\rho}}^{n}$ on single copies
  of $\ensuremath{\rho}$ using random measurements},
\newblock Physical Review Letters \textbf{108}, 110503 (2012),
\newblock \doi{10.1103/PhysRevLett.108.110503}.

\bibitem{Abanin2012meas}
D.~A. Abanin and E.~Demler,
\newblock \emph{Measuring entanglement entropy of a generic many-body system
  with a quantum switch},
\newblock Physical Review Letters \textbf{109}, 020504 (2012),
\newblock \doi{10.1103/PhysRevLett.109.020504}.

\bibitem{Brydges2019randomized}
T.~Brydges, A.~Elben \emph{et~al.},
\newblock \emph{Probing r{\'e}nyi entanglement entropy via randomized
  measurements},
\newblock Science \textbf{364}(6437), 260 (2019),
\newblock \doi{10.1126/science.aau4963},


\bibitem{Zanardi2013Ent}
P.~Zanardi and L.~Campos~Venuti,
\newblock \emph{Entanglement susceptibility: area laws and beyond},
\newblock Journal of Statistical Mechanics: Theory and Experiment
  \textbf{2013}(04), P04023 (2013),
\newblock \doi{10.1088/1742-5468/2013/04/p04023}.

\bibitem{Hamma2012topo}
G.~B. Hal\'asz and A.~Hamma,
\newblock \emph{Probing topological order with r\'enyi entropy},
\newblock Physical Review A \textbf{86}, 062330 (2012),
\newblock \doi{10.1103/PhysRevA.86.062330}.

\bibitem{page1993average}
D.~N. Page,
\newblock \emph{Average entropy of a subsystem},
\newblock Physical Review Letters \textbf{71}(9), 1291 (1993),
\newblock \doi{10.1103/physrevlett.71.1291}.

\bibitem{Bose2001pur}
S.~Bose, I.~Fuentes-Guridi \emph{et~al.},
\newblock \emph{Subsystem purity as an enforcer of entanglement},
\newblock Physical Review Letter \textbf{87}, 050401 (2001),
\newblock \doi{10.1103/PhysRevLett.87.050401}.

\bibitem{Zanardi2004pur}
P.~Zanardi and D.~A. Lidar,
\newblock \emph{Purity and state fidelity of quantum channels},
\newblock Physical Review A \textbf{70}, 012315 (2004),
\newblock \doi{10.1103/PhysRevA.70.012315}.

\bibitem{LIDAR200682}
D.~Lidar, A.~Shabani and R.~Alicki,
\newblock \emph{Conditions for strictly purity-decreasing quantum markovian
  dynamics},
\newblock Chemical Physics \textbf{322}(1), 82 (2006),
\newblock \doi{https://doi.org/10.1016/j.chemphys.2005.06.038},
\newblock Real-time dynamics in complex quantum systems.

\bibitem{DePasquale2011stat}
A.~D. Pasquale, P.~Facchi \emph{et~al.},
\newblock \emph{Statistical distribution of the local purity in a large quantum
  system},
\newblock Journal of Physics A: Mathematical and Theoretical \textbf{45}(1),
  015308 (2011),
\newblock \doi{10.1088/1751-8113/45/1/015308}.

\bibitem{streltsov2018coherence}
A.~Streltsov, H.~Kampermann \emph{et~al.},
\newblock \emph{Maximal coherence and the resource theory of purity},
\newblock New Journal of Physics \textbf{20}(5), 053058 (2018),
\newblock \doi{10.1088/1367-2630/aac484}.

\bibitem{Nielsen}
M.~A. Nielsen and I.~L. Chuang,
\newblock \emph{Quantum information theory}, p. 528–607,
\newblock Cambridge University Press,
\newblock \doi{10.1017/CBO9780511976667.016} (2010).

\bibitem{satzinger2021realizing}
K.~J. Satzinger, Y.~Liu \emph{et~al.},
\newblock \emph{Realizing topologically ordered states on a quantum processor},
\newblock arXiv (2021),
\newblock \href{http://arxiv.org/abs/2104.01180}{{ [quant-ph/2104.01180]}}.

\bibitem{Hammalungo_2012}
A.~Hamma, S.~Santra and P.~Zanardi,
\newblock \emph{Ensembles of physical states and random quantum circuits on
  graphs},
\newblock Physical Review A \textbf{86}, 052324 (2012),
\newblock \doi{10.1103/PhysRevA.86.052324}.

\end{thebibliography}
\end{document}